\newcommand{\aleq}{\approx_{\alpha}}
\newcommand{\caleq}{\approx_{\alpha,\theory{C}}}
\newcommand{\ealeq}{\approx_{\alpha,\theory{E}}}
\newcommand{\enarrow}[1]{\rightsquigarrow_{\theory{#1}}}
\gdef\scalefactor{#1}\begin{center}\proofSkipAmount \leavevmode}%
\scalebox{\scalefactor}{\DisplayProof}\proofSkipAmount \end{center} }
		\definecolor{bostonuniversityred}{rgb}{0.8, 0.0, 0.0}
\newcommand{\atomSet}{\mathbb{A}}
\newcommand{\pair}[1]{\langle #1\rangle}
\newcommand{\dom}[1]{\mathtt{dom}(#1)}
\newcommand{\nfpair}[1]{\langle #1\rangle_{nf}}
\newcommand{\pAction}[2]{#1\cdot #2}
\newcommand{\abs}[2]{[#1]#2}
\newcommand{\theory}[1]{\ensuremath{\mathsf{#1}}}
\newcommand{\context}[1]{\mathbb{C}}
\newcommand{\func}[1]{{\sf #1}}
\theoremstyle{plain}
\newtheorem{theorem}{Theorem}[section]
\newtheorem{corollary}[theorem]{Corollary}
\newtheorem{lemma}[theorem]{Lemma}
\newtheorem{proposition}[theorem]{Proposition}
\theoremstyle{definition}
\newtheorem{definition}[theorem]{Definition}
\newtheorem{example}{Example}[section]
\theoremstyle{remark}
\newtheorem{remark}[theorem]{Remark}
\title{Nominal Equational Rewriting and Narrowing}
\author{
Mauricio Ayala-Rincón
\institute{University of Brasília, Brazil}
\and
Maribel Fernández
\institute{King's College London, UK}
% \institute{School of Computer Science and Engineering\\
% University of New South Wales\thanks{A fine university.}\\
% Sydney, Australia}
%\email{dani.sms@hotmail.com}
\and
% Co Author \qquad\qquad 
Daniele Nantes-Sobrinho
\institute{University of Brasília, Brazil}
\institute{Imperial College London, UK}
\and 
Daniella Santaguida\thanks{Author funded by Capes.}
\institute{University of Brasília, Brazil}
% \email{\quad is@gmail.com \quad\qquad somebody@else.org}
%\email{d.nantes-sobrinho@imperial.ac.uk}
}
\begin{document}
\maketitle

\begin{abstract}
Narrowing is a well-known technique that adds to term rewriting mechanisms the required power to search for solutions to equational problems. Rewriting and narrowing are well-studied in first-order term languages, but several problems remain to be investigated when dealing with languages with binders using nominal techniques. Applications in programming languages and theorem proving require reasoning modulo $\alpha$-equivalence considering structural congruences generated by equational axioms, such as commutativity. 
This paper presents the first definitions of nominal rewriting and narrowing modulo an equational theory. We establish a property called nominal \theory{E}-coherence and demonstrate its role in identifying normal forms of nominal terms. Additionally, we prove the nominal \theory{E}-Lifting theorem, which ensures the correspondence between sequences of nominal equational rewriting steps and narrowing, crucial for developing a correct algorithm for nominal equational unification via nominal equational narrowing. We illustrate our results using the equational theory for commutativity.
\end{abstract}

\section{Introduction}

%background
The nominal framework~\cite{VarBinding/GabbayP02} has emerged as a promising approach for dealing with languages involving binders such as lambda calculus and first-order logic.  In this framework,
equality coincides with the $\alpha$-equivalence relation, denoted as $\aleq$, and freshness constraints are integrated within the nominal reasoning rather than being relegated to the meta-language. For example, the expression  $a\# M$  (``$a$ is fresh for $M$'') indicates that if a name $a$ occurs in a term $M$, it must be abstracted by some binder, such as the $\lambda$ in the lambda calculus, or $\exists,\forall$-quantification in first-order logic, i.e., $a$  cannot occur free in $M$.

%state-of-the-art
To enable reasoning within this framework, 
nominal  unification \cite{Matching/jcss/CalvesF10,NomUnification/UrbanPG04} was developed and  formalised in proof assistants such as Isabelle \cite{NomUnification/UrbanPG04}, PVS \cite{entcs:PVS/AnaCristina16} and Coq~\cite{A-C-AC/tcs/Ayala-RinconSFN19}. Nominal unification involves finding a substitution $\sigma$ that solves the problem $s\ _?{\overset{}{\approx}}_?\ t$, meaning $s\sigma\aleq t\sigma$, where $s$ and $t$ are nominal terms.  It is well-known that unification is fundamental for automated reasoning, serving as the foundation for resolution-based proof assistants, type inference, and numerous other applications. While these applications are anticipated to extend to nominal unification, substantial work is required to verify this.

To pursue applications of the nominal framework, extensions of nominal unification with equational theories have been investigated.  Initial efforts included integrating  the theories of \theory{A}ssociativity ($\approx_{\alpha,\theory{A}}$), \theory{C}ommutativity~($\approx_{\alpha,\theory{C}}$) and \theory{A}ssociativity-\theory{C}ommutativity ($\approx_{\alpha,\theory{AC}}$) to $\alpha$-equality~\cite{A-C-AC/tcs/Ayala-RinconSFN19}. Various algorithms for nominal unification modulo commutativity ($\theory{C}$-unification) and formalisations of their correctness in proof assistants PVS and Coq have been developed \cite{lopstr:NominalC-Unif/Washington17,OnSolvingNomFPequations/frocos/Washington17,FPconstraints/Ayala-RinconFN18,FormalisingNomC-unif/mscs/Gabriel21}. These development efforts reveal significant differences between first-order and nominal languages, such as the theory of \theory{C}-unification, which has nullary unification type if $\alpha$-equivalence is considered~\cite{lopstr:NominalC-Unif/Washington17}, contrasting with the finitary type of first-order \theory{C}-unification~\cite{Baader98}. 
 
Further investigations into nominal unification include exploring a \textit{letrec} constructor and extensions involving atom variables~\cite{NomUnif/Schmidt-Schauss22}. Another example is the development of an algorithm for nominal \theory{C}-matching~\cite{AFormalisationofNomCmatching/entcs/Ayala-RinconSFN19}, a special case of nominal \theory{C}-unification (dealing with problems $s\ _?{\overset{\theory{C}}{\approx}}_?\ t$ where the substitution $\sigma$ only applies in one side:  $s\sigma\caleq t$). Recently, a naive nominal extension of the  Stickel-Fages first-order  \theory{AC}-unification algorithm introduced cyclicity in solutions produced by translations of unification problems to Diophantine systems as reported in \cite{CICM:AyalaRinconFSKN23}, and this differs from the original (first-order) approach which has a terminating algorithm.

% gap to be investigated
These developments underline the complexity of extending equational unification algorithms to the nominal framework, and new methods need to be proposed to obtain the desired extensions. An alternative approach to solving nominal unification problems modulo equational theories (i.e., nominal $\theory{E}$-unification problems), developed in \cite{NominalNarrowing16}, involves the use of {\em nominal narrowing}\footnote{Roughly, nominal narrowing is a generalisation of nominal rewriting by using nominal unification instead of nominal matching in its definition.}. This technique can be used when the equational theory $\theory{E}$ is presented by a convergent nominal rewriting system~\cite{NominalRewriting/FernandezG07,NomSyntax/fct/DominguezF19}. 
Different extensions are needed for rewriting modulo \theory{E} when such a presentation is impossible. However, nominal techniques modulo an equational theory \theory{E}, and in particular, nominal \theory{E}-rewriting, remain unexplored.

% contribution of the paper
 This work represents the first step towards developing nominal \theory{E}-techniques, when using a convergent nominal rewrite system equivalent to the theory $\theory{E}$ is not possible.  
 In first-order term languages~\cite{JouannaudKK83:Incremental, Viola01, VariantNarrowing:EscobarMS09}, the standard technique is to split a set of identities \theory{T}  into a term rewriting system $\theory{R}$ and an equational part $\theory{E}$, so that $\theory{T}=\theory{R}{\cup}\theory{E}$, considering the rewriting relation generated by $\theory{R}$ on the equivalence classes of terms generated by $\theory{E}$. We propose extending this technique to nominal languages by adapting the notions of nominal rewriting~\cite{NominalRewriting/FernandezG07,Kikuchi020,Kikuchi22} and nominal narrowing~\cite{NominalNarrowing16}  to work modulo $\theory{E}$, incorporating the relation $\approx_{\alpha,\theory{E}}$.  These extensions result in the first definitions of nominal \theory{R/E}-rewriting (Definition \ref{def:RE-rewriting}) and \theory{R,E}-rewriting (Definition~\ref{def:rew-modC}) as well as \theory{E}-narrowing (Definition~\ref{def:narr-C}).
 
 %description of the developments
 Nominal \theory{R/E}-rewriting applies rules from \theory{R} in the equivalence class modulo $\approx_{\alpha, \theory{E}}$ of a nominal term $t$, while \theory{R,E}-rewriting uses nominal {\em $\theory{E}$-matching} to determine if a rule in \theory{R} applies to a nominal term, say $t$. This nominal term $t$ may have variables, thus the definition of the relations \theory{R/E} and \theory{R,E} also feature freshness conditions. We prove that it is possible to identify the normal form of a nominal term, say $t$, with respect to the relation \theory{R/E} (denoted $t\downarrow_\theory{R/E}$) to the normal form of the same term, but with respect to the relation \theory{R,E} (that is,  $t\downarrow_\theory{R,E}$), when the relation \theory{R,E} has a property called {\em nominal \theory{E}-coherence}, whose extension from a corresponding property in first-order term language~\cite{Jouannaud83:ConfluentandCoherent} is established here.

 %To develop an algorithm for nominal T-unification via nominal E-narrowing, it is essential to prove the correspondence between sequences of nominal R,E-rewriting, and E-narrowing steps (Nominal E-Lifting Theorem). The decidability of nominal T-unification depends on the decidability of nominal E-unification and E-matching, with commutativity C being the only equational theory with an existing nominal unification algorithm. Consequently, our work implies that the nominal C-Lifting Theorem holds. Despite the extensive extensions needed to establish nominal E-narrowing and rewriting, the final goal of using E-narrowing for T-unification and proving its soundness and completeness remains ongoing work.
 
 Proving the correspondence between sequences of nominal \theory{R,E}-rewriting steps and \theory{E}-narrowing steps (Nominal \theory{E}-Lifting Theorem~\ref{teo:Clifitng}) is essential to develop an algorithm for nominal \theory{T}-unification via nominal \theory{E}-narrowing. Since the decidability of nominal \theory{T}-unification relies on the decidability of nominal \theory{E}-unification and \theory{E}-matching, and so far, the only equational theory for which a nominal unification algorithm exists is commutativity \theory{C}, a corollary of our developments is that the nominal \theory{C}-Lifting Theorem holds. Finally, due to the volume of extensions that were necessary to establish nominal \theory{E}-narrowing and rewriting, the final goal of using \theory{E}-narrowing as a sound and complete procedure for solving nominal \theory{T}-unification, remains ongoing work.

 % We will report on our work in progress developing the nominal \theory{C}-narrowing and \theory{C}-rewriting relations, as well as present some of the extensions of the nominal language that were necessary to prove the nominal version of the Lifting Theorem (Corollary), that relates both narrowing and rewriting. The theorem is partially proven, we are one-step away from completing the proof due to some extensions that still need to be done in the nominal language and properties that need to be verified. 

% \alert{TO DO: Remember to add the references to the Japanese researchers's rewriting}.
% The goal of this abstract is to present our ongoing research and its challenges as well as to obtain feedback from the community.

%provide new developments for the particular theory of commutativity, which already gives some insight into the intricate extensions and properties that need to be addressed when dealing with more complex theories.

Summarising, our main contributions are:
\begin{enumerate}
 \item We extend the definitions and concepts regarding rewriting modulo $\theory{E}$ to the nominal framework. For instance, we have nominal versions of the relations $\to_{\theory{R,E}}$ and $\to_{\theory{R/E}}$ for rewriting, and $\rightsquigarrow_\theory{R,E}$ for nominal \theory{E}-narrowing.
 \item We prove technical auxiliary results relating $\to_{\theory{R,E}}$ and $\to_{\theory{R/E}}$. These required the establishment of the nominal \theory{E}-coherence property for ${\theory{R,E}}$.
 \item We  prove the nominal \theory{E}-Lifting Theorem (cf.\ Theorem~\ref{teo:Clifitng}) that establishes a correspondence between sequences of nominal \theory{E}-narrowing  $\rightsquigarrow_\theory{R,E}$ and nominal \theory{R,E}-rewriting $\to_{\theory{R,E}}$.
 \item Since $\theory{C}$ is the only equational theory for which a nominal unification algorithm exists, we illustrate our results using nominal $\theory{R,C}$-rewriting and narrowing.
    % as described earlier in this introduction.
    %\item We extend Lemma 22 in~\cite{NominalRewriting/FernandezG07} (here Lemma~\ref{lem:nf-subst}) to consider the commutative theory, using the relation $\caleq$, i.e., $\alpha$-equivalence modulo commutativity\dani{, which gives us the compatibility of judgments by substitutions, used in the correspondence of one-step narrowing to one-step rewriting.}
   % \todo{is it possible to write a sentence to say why this lemma is important?}
    % \item We define nominal $\theory{C}$-narrowing relation, written as $\rightsquigarrow_{\theory{R,C}}$.
   
    % \item We present a version of the Lifting Theorem (cf. Theorem 12~\cite{NominalNarrowing16}) taking into account commutativity:
    % \begin{itemize}
    %     \item we provide a mapping from nominal $\theory{C}$-narrowing sequences to nominal rewriting modulo $\theory{C}$ sequences, provided that some conditions of freshness constraints are given (Theorem~\ref{theo:narrowtorewrite}).
    %     % \item we also provide a naive approach to the reverse mapping, from nominal rewriting modulo $\theory{C}$ to nominal $\theory{C}$-narrowing (Theorem~\ref{theo:rewritenarr})., provided that a notion of $\theory{E}$-coherence (Definition~\ref{def:e-coherence}) is given for nominal frameworks and considering the theory \theory{E} to be commutative theory. The relation is necessary for relating $\to_{\theory{R/C}}$ and $\to_{\theory{R,C}}$, but this has to be further investigated.  
    % \end{itemize}
\end{enumerate}

\paragraph*{Organisation.}  In \S\ref{sec:preliminaries} we present the background necessary to read the paper. Novel material starts in  \S\ref{sec:e-rewriting}, where we extend the notions of rewriting and narrowing modulo an equational theory \theory{E} to the nominal framework and provide some examples. In \S\ref{section:NomLiftingModuloC} we present the classical Lifting Theorem, extended to the nominal framework, taking into account an equational \theory{E} for which a nominal $\theory{E}$-unification algorithm exists. \S\ref{sect:future-work} concludes the paper.

\section{Preliminaries}\label{sec:preliminaries}

 While we assume the reader's familiarity with nominal techniques, we briefly recap some basic definitions. For more details, we refer to~\cite{NominalRewriting/FernandezG07}. In this (and the following) section(s), we will use $\equiv$ for syntactic equality, $=$ for definitions and $\approx_\alpha$ for $\alpha$-equality.

\paragraph{Syntax.}
Fix countable infinite pairwise disjoint sets of {\em atoms} \(\mathbb{A} = \{a,b,c,\ldots \}\) and {\em variables} \({\cal X} = \{X, Y, Z, \ldots \}\).
Atoms follow the \textit{atom convention}: atoms \(a, b, c,\ldots \)  over \(\atomSet\) represent different names.
% Therefore, they represent different objects.
Let \(\Sigma\) be a finite set of term-formers disjoint from \(\mathbb{A}\) and \({\cal X}\) such that for each \(f \in \Sigma\), a unique non-negative integer \(n\) (arity of \(f\)) is assigned.
%We assume there is at least one \(f : n\)  such that $n>0$.
A \textit{permutation} \(\pi\) is a bijection on \(\atomSet\) with finite domain, i.e., the set \(\dom{\pi} = \{a \in \atomSet \mid \pi(a) \neq a \}\) is finite. The identity permutation is denoted $id$. The composition of permutations $\pi$ and $\pi'$ is denoted $\pi\circ \pi'$ and $\pi^{-1}$ denotes the inverse of the permutation $\pi$.   

{\em Nominal terms} are defined  inductively by the grammar:
 $$ s,t,u ::= a \mid \pAction{\pi}{X} \mid  \abs{a}{t} \mid f(t_1, \ldots, t_n),$$
where \(a\) is an {\em atom}, \(\pAction{\pi}{X}\) is a (moderated/suspended) variable, \(\abs{a}{t}\) is the {\em abstraction} of \(a\) in the term \(t\), and \(f(t_1, \ldots, t_n)\) is a {\em function application} with \(f \in \Sigma\) and \(f:n\).  We abbreviate $id\cdot X$ as $X$. A term is \textit{ground} if it does not contain (moderated) variables. 
A \emph{position} $\context{C}$ is defined as a pair $(s, \_)$ of a term and a distinguished variable $\_ \in \mathcal{X}$ that occurs exactly once in $s$. We write $\context{C}[s']$ for $\context{C}[\_ \mapsto s']$ and if $s\equiv \context{C}[s']$, we say that $s'$ is a subterm of $s$ with position $\context{C}$. The root position will be denoted by $\context{C} = [\_]$. 

\begin{remark}[Positions]
  Our definition of ‘position’ is equivalent to the standard notion of a point in the abstract syntax tree of a term, as defined, for example, in~\cite{Baader98}. It is more convenient for us to identify this with the corresponding ‘initial segment’ of a nominal term, in which the `hole' is a variable in ${\cal X}$; thus positions of a term can be expressed within our language.
\end{remark}

% \dani{Should we put the definition of permutation action on a term?}

%\begin{definition}{(Permutation action)}
A \emph{permutation action} of $\pi$ on a term $t$ is defined by induction on the term structure as expected:
$$\pi\cdot a = \pi(a) \qquad \pi \cdot (\pi'\cdot X) = (\pi \circ \pi')\cdot X \qquad \pi\cdot [a]t = [\pi\cdot a](\pi\cdot t) \qquad \pi\cdot f(t_1,\ldots, t_n) = f(\pi\cdot t_1, \ldots, \pi\cdot t_n).$$

The \emph{difference set} of two permutations
$ds(\pi,\pi') := \{ n \; | \; \pi\cdot n \neq \pi'\cdot n \}$. So $ds(\pi,\pi')\# X$ represents the set of constraints $\{ n\# X \; | \; n \in ds(\pi,\pi') \}$. For example, if $\pi = (a \ b)(c \ d)$ and $\pi' = (c \ b)$, then $ds(\pi,\pi') = \{a,b,c,d\}$ since $\pi$ and $\pi'$ act differently in each atom: note that $\pi(a)=b$ and $\pi'(a)=a$. In addition,  $ds(\pi,\pi')\# X = \{a\#X,b\#X,c\#X,d\#X\}$. 
% $$\pi\cdot [a]t = [\pi\cdot a](\pi\cdot t) \qquad \pi\cdot f(t_1,\ldots, t_n) = f(\pi\cdot t_1, \ldots, \pi\cdot t_n).$$

%\end{definition}

%\begin{definition}{(Substitution)}
A \emph{substitution} $\theta$ is a mapping from a finite set of variables to terms. The \emph{substitution action} $t\theta$ is defined as follows:
$$a\theta = a \qquad (\pi\cdot X)\theta = \pi\cdot(X\theta) \qquad ([a]t)\theta = [a](t\theta) \qquad f(t_1,\ldots, t_n)\theta = f(t_1\theta,\ldots, t_n\theta).$$
% $$([a]t)\theta = [a](t\theta) \qquad f(t_1,\ldots, t_n)\theta = f(t_1\theta,\ldots, t_n\theta).$$
The domain of a substitution $\theta$ is written as $\texttt{dom}(\theta)$, and the image is denoted as $\texttt{Im}(\theta)$. Therefore, if $X \not\in \texttt{dom}(\theta)$ then $X\theta = X$. Also, if we restrict the domain to a certain set $V\subseteq {\cal X}$ of variables,  we obtain the substitution $\theta|_V$,  the \emph{restriction of $\theta$ to $V$}.
%which is defined as  $\theta|_V=\{\}$ where the substitution will map $X$ to $X\theta$, if $X \in V$, and to $X$, otherwise, then we call .
The identity substitution is denoted {\tt Id}.  The composition of two substitutions $\theta_1$ and $\theta_2$ will be denoted by simple juxtaposition 
% $\theta_1 \circ \theta_2$ and as 
as $\theta_1\theta_2$ and it applies to a term as $t\theta_1\theta_2=(t\theta_1)\theta_2$.
%\end{definition}

\paragraph{Nominal Constraints, Judgements and Rewriting.} There are two kinds of constraints: $s\aleq t$ is an (alpha-)equality constraint and $a\#t$ is a freshness constraint which means that $a$ cannot occur unabstracted in $t$. {\em Primitive constraints} have the form $a\#X$ and $\nabla,\Delta$ denote finite sets of primitive constraints. We will use the abbreviation $a,b,c\# X$ to denote the set of freshness constraints $\{a\# X, b\# X, c\# X\}$. {\em Judgements} have the form $\Delta\vdash s\aleq t$ and $\Delta\vdash a\# t$ and are derived using the rules in Figure~\ref{fig:fresh-and-equalrelation}. 
% \daniele{Define $\Delta$ consistent.}
%A \alert{context $\Delta$ is called \emph{consistent}} if it is a finite set of freshness constraints that do not reduce, via bottom-up application of the rules in Figure~\ref{fig:fresh-and-equalrelation}, to constraints of the form $a\#a$.

Given a finite set of freshness constraints $\Delta$ and a substitution $\theta$, $\Delta\theta$ consists of the set of constraints $\{a\#X\theta \mid a\#X \in \Delta\}$ and $\nfpair{\Delta\theta}$ consists of the set of freshness constraints obtained after applying the rules from Figure~\ref{fig:fresh-and-equalrelation} in $\Delta\theta$, in a bottom-up manner. $\nfpair{\Delta\theta}$ is {\em consistent} when it does not contain constraints of the form $a\#a$.
A \emph{problem $Pr$} is a set of constraints, and we write $\Delta \vdash Pr$ when for all $P \in Pr$ there is a derivation proof using the rules in Figure~\ref{fig:fresh-and-equalrelation}, taking elements of the context $\Delta$ as assumptions.

    \begin{figure}[!t]
    \centering
    \rule{15.3cm}{0.01cm}
    \begin{multicols}{2}
    \begin{prooftree}
    \AxiomC{}
    \RightLabel{\footnotesize (\# atom)}
    \UnaryInfC{$\Delta \vdash a \# b$}
    \end{prooftree} 
    \begin{prooftree}
    \AxiomC{$\Delta \vdash a \# t_1 \; \cdots \; \Delta \vdash a \# t_n$}
    \RightLabel{\footnotesize (\# app)}
    \UnaryInfC{$\Delta \vdash a \# f(t_1, \cdots, t_n)$}
    \end{prooftree}
    \end{multicols}
    \begin{multicols}{3}
    \begin{prooftree}
    \AxiomC{}
    \RightLabel{\footnotesize (\# a[a])}
    \UnaryInfC{$\Delta \vdash a \# [a]t$}
    \end{prooftree}
    \begin{prooftree}
    \AxiomC{$\Delta \vdash a \# t$}
    \RightLabel{\footnotesize (\# a[b])}
    \UnaryInfC{$\Delta \vdash a \#[b]t$}
    \end{prooftree}
    \begin{prooftree}
    \AxiomC{$(\pi^{-1} \cdot a \# X) \in \Delta$}
    \RightLabel{\footnotesize (\# var)}
    \UnaryInfC{$\Delta \vdash a \# \pi \cdot X$}
    \end{prooftree}
    \end{multicols}
%     \rule{15.3cm}{0.01cm}
%     \caption{Rules for $\#$}\label{fig:freshnessrelation}
%     \end{figure}

% \begin{figure}[!t]
% \centering
% \rule{15.3cm}{0.01cm}
\begin{multicols}{2}
\begin{prooftree}
\AxiomC{}
\RightLabel{\footnotesize ($\approx_\alpha$ atom)}
\UnaryInfC{$\Delta \vdash a \ \approx_\alpha \ a$}
\end{prooftree}

\begin{prooftree}
\AxiomC{$\Delta \vdash s_1 \ \approx_\alpha \ t_1 \; \cdots \; \Delta \vdash s_n \ \approx_\alpha \ t_n$}
\RightLabel{\footnotesize ($\approx_\alpha$ app)}
\UnaryInfC{$\Delta \vdash f(s_1, \cdots, s_n) \ \approx_\alpha \ f(t_1, \cdots, t_n)$}
\end{prooftree}
\end{multicols}

\begin{multicols}{2}
\begin{prooftree}
\AxiomC{$\Delta \vdash s \ \approx_\alpha \ t$}
\RightLabel{\footnotesize ($\approx_\alpha$ [aa])}
\UnaryInfC{$\Delta \vdash [a]s \ \approx_\alpha \ [a]t$}
\end{prooftree}

\begin{prooftree}
\AxiomC{$\Delta \vdash s \ \approx_\alpha \ (a\ b)\cdot t \qquad \Delta \vdash a \# t$}
\RightLabel{\footnotesize ($\approx_\alpha$ [ab])}
\UnaryInfC{$\Delta \vdash [a]s \ \approx_\alpha \ [b]t$}
\end{prooftree}
\end{multicols}

\begin{prooftree}
\AxiomC{$ds(\pi,\pi')\# X \in \Delta$}
\RightLabel{\footnotesize ($\approx_\alpha$ var)}
\UnaryInfC{$\Delta \vdash \pi \cdot X \ \approx_\alpha \ \pi' \cdot X$}
\end{prooftree}
\rule{15.3cm}{0.01cm}

\caption{Rules for $\#$ and $\approx_{\alpha}$}\label{fig:fresh-and-equalrelation}
\end{figure}

% \begin{figure}[!t]
%     \centering
%     \includegraphics[scale=0.43]{rules1.png}
%     \caption{Rules for $\#$}
%     \label{fig:freshnessrelation}
% \end{figure}

% \begin{figure}[!t]
%     \centering
%     \includegraphics[scale=0.43]{rules2.png}
%     \caption{Rules for $\approx_{\alpha}$}
%     \label{fig:equalrelation}
% \end{figure}

%since commutativity does not interfere with freshness constraints. 

\begin{example}
Let $\Sigma_{\lambda}=\{{\sf lam}, {\sf app}\}$ denote the signature whose function symbols have arities ${\sf lam}:1$ and ${\sf app}:2$. Let $Pr=\func{lam}[a]\func{app}(a,X)\aleq \func{lam}[b]\func{app}(b,(a \ c)\cdot X)$ be a problem and  $\Delta = \{a,b,c\#X\}$ be a context. We verify the derivability of $\Delta\vdash \func{lam}[a]\func{app}(a,X)\aleq \func{lam}[b]\func{app}(b,(a \ c)\cdot X)$:

% \daniele{Add an example of a derivation. Find a $\Delta$ that works below and write the derivation.}
{\small
\begin{prooftree}
    \AxiomC{}
    \RightLabel{\footnotesize ($\aleq$ atom)}
    \UnaryInfC{$\Delta \vdash a \aleq a$}
    \AxiomC{$a,b,c\#X\in\Delta$}
    \RightLabel{\footnotesize ($\aleq$ var)}
    \UnaryInfC{$\Delta \vdash X \aleq (a\ b)(a\ c)\cdot X$}
    \RightLabel{\footnotesize ($\aleq$ app)}
    \BinaryInfC{$\Delta \vdash \func{app}(a,X)\aleq \func{app}(a,(a\ b)(a\ c)\cdot X)$}
    \AxiomC{}
    \RightLabel{\footnotesize ($\#$ atom)}
    \UnaryInfC{$\Delta\vdash a\#b$}
    \AxiomC{$c\# X \in \Delta$}
    \RightLabel{\footnotesize ($\#$ var)}
    \UnaryInfC{$\Delta \vdash a\# (a\ c)\cdot X$}
    \RightLabel{\footnotesize ($\#$ app)}
    \BinaryInfC{$\Delta\vdash a\# \func{app}(b,(a\ c)\cdot X)$}
    \RightLabel{\footnotesize ($\aleq$ [ab])}
    \BinaryInfC{$\Delta\vdash [a]\func{app}(a,X) \aleq [b]\func{app}(b,(a\ c)\cdot X)$}
    \RightLabel{\footnotesize ($\aleq$ app)}
    \UnaryInfC{$\Delta\vdash \func{lam}[a]\func{app}(a,X) \aleq \func{lam}[b]\func{app}(b,(a\ c)\cdot X)$}
\end{prooftree}
}
\end{example}

A {\em term in context} $\Delta\vdash t$ expresses that the term $t$ has the freshness constraints imposed by $\Delta$. For example, $a\#X\vdash f(X,h(b))$ expresses that $a$ cannot occur fresh in instances of $X$. Nominal rewriting rules can be defined under freshness constraints, i.e., $\nabla\vdash l\to r$ denotes a nominal rewriting rule. We denote by \theory{R}, a finite set of nominal rewriting rules.
     %
% We shall represent permutations as lists of \textit{swappings} $\pi = \swapping{a_1}{b_1}\swapping{a_2}{b_2}\dots\swapping{a_n}{b_n}$.
% The identity permutation is denoted by \(\permID\) and \(\pi \circ  \pi'\) the composition of \(\pi\) and \(\pi'\).
% The set $\pGroup$ of all such permutations together with the composition operation form a group $(\pGroup, \circ)$ and it will be denoted simply by $\pGroup$.
% The \textit{difference set} of \( \pi \) and \( \gamma \) is defined by \( \dset{\pi}{\gamma} = \{a \in \atomSet \mid \pi(a) \neq \gamma(a) \} \).
%
% \begin{definition}[Nominal Terms]\label{definition:nom-terms}
%     The set $\terms$ of {\em nominal terms}, or just \textit{terms} for short, is inductively defined by the following grammar:
%     $$ s,t,u ::= a \mid \pAction{\pi}{X} \mid  \abs{a}{t} \mid f(t_1, \ldots, t_n),$$
%     where \(a\) is an {\em atom}, \(\pAction{\pi}{X}\) is a moderated variable, \(\abs{a}{t}\) is the {\em abstraction} of \(a\) in the term \(t\), and \(f(t_1, \ldots, t_n)\) is a {\em function application} with \(f \in \Sigma\) and \(f:n\). A term is \textit{ground} if it does not contain variables.
% \end{definition}

The {\em nominal rewriting relation} $\to_{\theory{R}}$ is defined as in~\cite{NominalRewriting/FernandezG07}: 
% \alert{Use same notation here and below?? $\wedge$ or commas??}
\begin{mathpar} 
\inferrule{ s\equiv\context{C}[s'] \qquad \Delta \vdash \big(\nabla \theta,\qquad  s'\aleq \pi\cdot (l \theta),\qquad  \context{C}[\pi \cdot (r \theta)] \aleq t\big)}{\Delta\vdash s \to_\theory{R} t}
\end{mathpar}
for a substitution $\theta$, a subterm $s'$ of $s$, a position $\context{C}$ and a nominal rule $\nabla\vdash l\to r\in \theory{R}$. We will omit the subscript \theory{R} and write only $\Delta\vdash s \to t$ when there is no ambiguity. 
% \vspace{-3mm}

% \daniele{
% We need to remove the ambiguity: $(s,t) \in E\subseteq T(\Sigma,X)\times T(\Sigma,X)$ we denote it as $s\approx t$.
%  }
 
%\begin{definition}{(Nominal equality)}
\paragraph{Equality modulo an equational theory \theory{E}.} A nominal {\em identity}  is a  pair in context $\nabla\vdash (l,r)$ of nominal terms $l$ and $r$ under a (possibly empty) freshness context $\nabla$.  We denote such identity as $\nabla \vdash l\approx r$.  A set \theory{E} of identities induces an {\em equational theory}, which we will also denote as \theory{E}.

The \emph{nominal algebra equality modulo \theory{E}}, denoted  $\Delta \vdash s\ealeq t$, is the least transitive reflexive symmetric relation such that for any $(\nabla \vdash l\approx r)\in \theory{E}$, position $\context{C}$, permutation $\pi$, substitution $\theta$, and fresh context $\Gamma$ (so if $a\# X \in \Gamma$ then $a$ is not mentioned in $\Delta, s, t$):
\begin{prooftree}
    \AxiomC{$\Delta, \Gamma \vdash \big(\nabla\theta, \qquad s\approx_\alpha \context{C}[\pi\cdot(l\theta)], \qquad \context{C}[\pi\cdot (r\theta)]\approx_\alpha t \big)$}
    \RightLabel{($Ax_\theory{E}$)}
    \UnaryInfC{$\Delta \vdash s\approx_{\alpha,\theory{E}} t$}
\end{prooftree}

%\end{definition}

% \daniele{The notation is not uniform with the definition above. Rewrite the rule below as we did in unif. We need to say that $f$ is commutative.} \dani{done}
\begin{remark}\label{rmk:eq}We can also define $\ealeq$ by extending the rules of Figure~\ref{fig:fresh-and-equalrelation} with the dedicated rules for the identities defining \theory{E}.
% For instance, for $\alpha$-equivalence modulo commutativity, denoted by $\caleq$, we have the same rules for $\approx_\alpha$ (Figure~\ref{fig:fresh-and-equalrelation}), where we replace $\approx_\alpha$ for $\caleq$, and add the rule ($\caleq C$) below:
% \begin{prooftree}
% \AxiomC{$\Delta \vdash s_0 \ \caleq \ t_i \qquad \Delta \vdash s_1 \ \caleq \ t_{(i+1)mod2}$}
% \RightLabel{\footnotesize , $i = 0,1$ ($\caleq$ C)}
% \UnaryInfC{$\Delta \vdash f(s_0, s_1) \ \caleq \ f(t_0, t_1)$}
% \end{prooftree}
 For example, the identity expressing the commutativity of a function symbol $f^\theory{C}$ is $\theory{C}=\{\vdash f^\theory{C}(X,Y)\approx f^\theory{C}(Y,X)\}$. In this case,  we need to add the following rule: 
\begin{mathpar}
\inferrule*[RIGHT=\mbox{\footnotesize $(\caleq \text{C})$}]{\Delta \vdash s_0 \ \caleq \ t_i \qquad \Delta \vdash s_1 \ \caleq \ t_{1-i} \\ i=0, 1 \\ f^\theory{C}\in \Sigma^\theory{C}}{\Delta \vdash f^\theory{C}(s_0, s_1) \ \caleq \ f^\theory{C}(t_0, t_1)}
\end{mathpar}
where $\Sigma^\theory{C}$ denotes a signature of commutative function symbols. 
Rule $(\aleq \text{app})$ only applies when the function symbol $f$ is not commutative. In addition, we need to modify the rules in Figure~\ref{fig:fresh-and-equalrelation} to use $\caleq$ instead of $\aleq$.
\end{remark}

% \dani{Should we say here that $\Delta \vdash s \ealeq t$ (equivalently $\Delta\vdash_\theory{E} s \approx t$) ?}
Note that if we define an equational theory \theory{E}
 using the rule ($Ax_\theory{E}$), the equational theory is a congruence relation, and $\vdash$ is compatible with substitutions by definition. However, this rule generates a lot of redundant derivations. To avoid this, we will use specific rules for each \theory{E}, as for the commutative rule in Remark~\ref{rmk:eq}. This choice comes with a cost,
 %e.g. the rule $(\caleq \text{C})$ is not closed by substitutions, 
 now we need to prove the compatibility of $\vdash$ by substitution.

\begin{definition}[\theory{E}-Compatibility of $\vdash$ by substitutions]\label{def:e-compatible}
    An equational theory \theory{E} is {\em compatible with $\vdash$ by substitutions} iff the following hold, whenever $\pair{\Delta\theta}_{nf}$ is consistent. 

    \begin{enumerate}
    \item If $\Delta \vdash a\# t$ then $\pair{\Delta\theta}_{nf} \vdash a\# (t\theta)$. 
    \item If $\Delta \vdash s \ealeq t$ then $\pair{\Delta\theta}_{nf} \vdash (s\theta) \ealeq (t\theta) $.
    \item If $\Delta \vdash Pr$ then $\pair{\Delta\theta}_{nf} \vdash Pr\theta$.
\end{enumerate}
\end{definition}

% We now extend a lemma presented in \cite{NominalRewriting/FernandezG07} to consider the commutative theory, i.e., we use the relation $\caleq$.  
% \daniele{We need to say why this lemma is important. Where was it used?}
The next proposition guarantees the compatibility of judgements by substitutions when the theory \theory{C} for commutativity is considered. This proposition is technical and will be used in the correspondence of one-step narrowing to one-step rewriting in Lemma~\ref{lem:correctness}.

% instead of $\approx_\alpha$ and consider, in addition to the rules of Fig.~\ref{fig:fresh-and-equalrelation}, the following rule to manipulate commutative function symbols in $\Sigma_C$:
% \begin{mathpar}
% \inferrule*[RIGHT=(f^C\in \Sigma_C)]{\Delta\vdash s_1\caleq t_i\\ \Delta\vdash s_2\caleq t_{3-i}}{\Delta\vdash f^C(s_1,s_2)\caleq f^C(t_1,t_2)}    
% \end{mathpar}

%In the lemma below we consider a set $Pr$ of problems which consists of a finite set of constraints of the form $a\# t$ and $s\caleq t$. 

\begin{proposition}
\label{lem:nf-subst}
The equational theory \theory{C} is compatible with substitutions.
% \begin{enumerate}
%     \item If $\Delta \vdash a\# t$ then $\pair{\Delta\theta}_{nf} \vdash a\# (t\theta)$. 
%     \item If $\Delta \vdash s \caleq t$ then $\pair{\Delta\theta}_{nf} \vdash (s\theta) \caleq (t\theta) $.
%     \item If $\Delta \vdash Pr$ then $\pair{\Delta\theta}_{nf} \vdash Pr\theta$.
% \end{enumerate}
\end{proposition}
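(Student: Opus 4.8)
The plan is to verify the three clauses of Definition~\ref{def:e-compatible} for $\theory{E}=\theory{C}$. Clause~3 is immediate once clauses~1 and~2 are available, since a problem $Pr$ is a finite set of freshness and equality constraints and $\Delta\vdash Pr$ holds exactly when each such constraint is derivable; applying clause~1 to the freshness constraints of $Pr$ and clause~2 to the equality constraints gives $\pair{\Delta\theta}_{nf}\vdash Pr\theta$. So the work lies in clauses~1 and~2, which I would prove by induction on the derivations built from the rules of Figure~\ref{fig:fresh-and-equalrelation} together with the commutative rule $(\caleq\text{C})$ of Remark~\ref{rmk:eq}; clause~1 is proved first, as the equality case depends on it.

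For clause~1, I would argue by induction on the derivation of $\Delta\vdash a\# t$, splitting on the last rule. The rules $(\#\text{atom})$, $(\#\text{app})$, $(\#\,a[a])$ and $(\#\,a[b])$ commute with the substitution action, since $\theta$ fixes atoms and abstracted atoms and distributes over function application, so these follow from the induction hypotheses. The decisive case is $(\#\text{var})$, where $t\equiv\pi\cdot X$ and $(\pi^{-1}\cdot a\,\#\,X)\in\Delta$; then $(\pi^{-1}\cdot a\,\#\,X\theta)\in\Delta\theta$, hence $\Delta\theta\vdash\pi^{-1}\cdot a\,\#\,X\theta$. Invoking the equivariance of freshness, namely $\Gamma\vdash a\,\#\,\pi\cdot u$ iff $\Gamma\vdash\pi^{-1}\cdot a\,\#\,u$, together with the fact that $\pair{\Delta\theta}_{nf}$ proves the same freshness judgements as $\Delta\theta$ (soundness and completeness of freshness normalisation, which hold under the assumed consistency of $\pair{\Delta\theta}_{nf}$), I conclude $\pair{\Delta\theta}_{nf}\vdash a\,\#\,\pi\cdot(X\theta)$, i.e.\ $\pair{\Delta\theta}_{nf}\vdash a\,\#\,(\pi\cdot X)\theta$.

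For clause~2, the induction is on the derivation of $\Delta\vdash s\caleq t$. The rule $(\caleq\text{atom})$ is trivial; $(\caleq\text{app})$, $(\caleq[aa])$ and the commutative rule $(\caleq\text{C})$ are obtained by applying the induction hypotheses to the immediate subterms and reassembling the derivation with the same rule, using that $(\cdot)\theta$ commutes with function application and abstraction — the commutative rule in particular is preserved verbatim. For $(\caleq[ab])$ one further discharges the side condition $\Delta\vdash a\# t$ through clause~1. The delicate case is $(\caleq\text{var})$, where $s\equiv\pi\cdot X$, $t\equiv\pi'\cdot X$ and $ds(\pi,\pi')\#X\in\Delta$: after substitution I must derive $\pair{\Delta\theta}_{nf}\vdash\pi\cdot(X\theta)\caleq\pi'\cdot(X\theta)$. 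Since $n\# X\in\Delta$ for every $n\in ds(\pi,\pi')$, clause~1 yields $\pair{\Delta\theta}_{nf}\vdash ds(\pi,\pi')\# X\theta$, and the difference-set lemma (if $\Gamma\vdash ds(\pi,\pi')\# u$ then $\Gamma\vdash\pi\cdot u\aleq\pi'\cdot u$, whence $\Gamma\vdash\pi\cdot u\caleq\pi'\cdot u$ because $\aleq\,\subseteq\,\caleq$) applied to $u=X\theta$ gives the required equality.

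The main obstacle is concentrated in the two suspended-variable cases, where the substitution breaks the inductive shape of the original derivation: $X\theta$ may be an arbitrary term, so one cannot simply re-apply $(\#\text{var})$ or $(\caleq\text{var})$. These are exactly the points where the equivariance lemma and the difference-set lemma are indispensable, and where the transition from $\Delta\theta$ to its normal form $\pair{\Delta\theta}_{nf}$ must be justified — the consistency hypothesis on $\pair{\Delta\theta}_{nf}$ is precisely what guarantees that freshness normalisation neither creates nor destroys freshness facts. By contrast, the commutativity axiom contributes no real difficulty: its rule survives substitution unchanged, and the only theory-specific observation is the harmless inclusion $\aleq\,\subseteq\,\caleq$ used to absorb the output of the difference-set lemma.
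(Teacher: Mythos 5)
Your proposal is correct and follows essentially the same route as the paper, whose proof is exactly the induction you carry out: induction on the derivation of $\Delta \vdash a\# t$ or $\Delta\vdash s\caleq t$ over the rules of Figure~\ref{fig:fresh-and-equalrelation} extended with the rule $(\caleq\text{C})$. The paper leaves the case analysis implicit, and your elaboration (equivariance for the $(\#\text{var})$ case, the difference-set lemma plus $\aleq\,\subseteq\,\caleq$ for the $(\caleq\text{var})$ case, and clause~1 feeding into clause~2) is precisely the detail that induction requires.
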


\begin{proof}
By induction on the derivation of $\Delta \vdash a \# t$ or $\Delta \vdash s \caleq t$, using the rules of Fig.~\ref{fig:fresh-and-equalrelation}
% in addition the rule for $f^C$.
extended for $\caleq$-equality.
\end{proof}

\paragraph{Nominal \theory{C}-unification algorithm.}
We consider the rule-based algorithm for nominal $\theory{C}$-unification,  introduced  in~\cite{OnSolvingNomFPequations/frocos/Washington17} and defined by the rules presented in Figure~\ref{fig:freshness-and-equal-simp}.
The rules act on triples  ${\cal P}=(\Delta, \theta, Pr)$, where $\Delta$ is a freshness context, $\theta$ is a substitution and 
%where $\theta$ is a substitution. A \theory{C}-{\em unification problem-in-context} is a pair  $(\Delta, Pr)$ consisting of a freshness context $\Delta$ and a
$Pr$ is a \theory{C}-problem, i.e., a set of freshness and $\caleq$-equality constraints.  
%To a unification problem (-in-context)   $(\Delta, Pr)$  we associate the (unification) triple $(\Delta, \texttt{Id}, Pr)$, where {\tt Id} denotes the identity substitution. 
We will denote the triples by $\mathcal{P}, \mathcal{Q}, \mathcal{S},\cdots$. 

% With these new information, it is necessary to adapt the definitions and notations from the previous section.

\begin{definition}{($\theory{C}$-solution)}\label{def:csolution}
A \emph{$\theory{C}$-solution} for a triple $\mathcal{P} = (\Delta, \delta, Pr) $ is a pair $(\Delta',\theta)$ where the following conditions are satisfied:
\begin{enumerate}
    \item $\Delta' \vdash \Delta\theta$;
    \item $\Delta' \vdash a\#t\theta$, if $a\#t \in Pr$;
    \item $\Delta' \vdash s\theta \caleq t\theta$, if $s \caleq t \in Pr$;
    \item there is a substitution $\theta'$ such that $\Delta' \vdash \delta\theta' \caleq \theta$.
\end{enumerate}
If there is no $(\Delta', \theta)$, then we say that the problem $\mathcal{P}$ is \emph{unsolvable}. Also $\mathcal{U}_{\theory{C}}(\mathcal{P})$ denotes the set of all $\theory{C}$-solutions of the triple $\mathcal{P}$.
\end{definition}
Let 
 $(\Delta_1, \theta_1)$ and $(\Delta_2, \theta_2)$ be solutions in $\mathcal{U}_{\theory{C}}(\mathcal{P})$. We say that $(\Delta_1, \theta_1)$ is \emph{more general than} $(\Delta_2, \theta_2)$, and denote it as $(\Delta_1,\theta_1) \leq_{\theory{C}} (\Delta_2, \theta_2)$, if there exists a substitution $\theta'$ such that $\Delta_2 \vdash X\theta_1\theta' \caleq X\theta_2$, for all $X\in \mathcal{X}$ and $\Delta_2 \vdash \Delta_1\theta'$. 
We write $\leq^V_\theory{C}$ for the restriction of $\leq_{\theory{C}}$ to a set $V$ of variables.
\begin{definition}{(Nominal $\theory{C}$-unification problem)}\label{def:nominalCunif}
A \emph{nominal $\theory{C}$-unification problem (in context)} is a pair 
 $(\nabla\vdash l)\;_?{\overset{\theory{C}}{\approx}}_?  \; (\Delta\vdash s)$.
The pair $(\Delta',\theta)$ is an $\theory{C}$-solution, or  $\theory{C}$-unifier, of $(\nabla\vdash l) \;_?{\overset{\theory{C}}{\approx}}_?  \; (\Delta\vdash s)$ iff
$(\Delta',\theta)$ is a $\theory{C}$-solution of the triple
$ \mathcal{P}= (\{\nabla,\Delta \}, \texttt{Id}, \{l\caleq s\})$, that is, 
conditions (1)-(4) of Definition~\ref{def:csolution} are satisfied.
$\mathcal{U}_{\theory{C}}(\nabla\vdash l,\Delta\vdash s)$ denotes the set of all $\theory{C}$-solutions of $(\nabla\vdash l) \;_?{\overset{\theory{C}}{\approx}}_?  \; (\Delta\vdash s)$.
If $\nabla$ and $\Delta$ are empty we write  simply $\mathcal{U}_{\theory{C}}(l,s)$. 
%for the set of $\theory{C}$-unifiers of $l$ and $s$.
A subset $\mathcal{V}\in \mathcal{U}_{\theory{C}}(\mathcal{P})$ is said to be a \emph{complete set of $\theory{C}$-solutions of $\mathcal{P}$} if for all $(\Delta_1,\theta_1) \in \mathcal{U}_{\theory{C}}(\mathcal{P})$, there exists $(\Delta_2, \theta_2) \in \mathcal{V}$ such that $(\Delta_2,\theta_2) \leq_{\theory{C}} (\Delta_1, \theta_1)$.
\end{definition}

%\end{definition}

% Finally, we now present the transformation rules used in the algorithm: see Figure~\ref{fig:freshness-and-equal-simp},
% % and Figure~\ref{fig:equal-simp}
% where $\uplus$ denotes disjoint union.

\begin{figure}[!t]
\centering
\rule{15.3cm}{0.01cm}
    $$    
    \begin{array}{l@{\hspace{1cm}}rll}
        {\text{\footnotesize ($\#$ ab)}} & (\Delta, \theta, Pr \uplus \{a \# b \}) & \Longrightarrow & (\Delta, \theta, Pr) \\
        {\text{\footnotesize ($\#$ app)}} & (\Delta, \theta, Pr \uplus  \{a \# f(t_1,\cdots, t_n)\}) & \Longrightarrow & (\Delta, \theta, Pr \cup \{a \# t_1,\cdots, a \# t_n \}) \; \\
        {\text{\footnotesize ($\#$ a[a])}} & (\Delta, \theta, Pr \uplus \{a \# [a] t \}) & \Longrightarrow & (\Delta, \theta, Pr) \\
        {\text{\footnotesize ($\#$ a[b])}} & (\Delta, \theta, Pr \uplus \{a \# [b] t \}) & \Longrightarrow & (\Delta, \theta, Pr \cup \{a \#t\}) \\
        {\text{\footnotesize ($\#$ var)}} & (\Delta, \theta, Pr \uplus \{a \# \pi \cdot X\}) & \Longrightarrow & (\{(\pi^{-1} \cdot a)\# X\}\cup \Delta, \theta, Pr) \\
    \end{array}
    $$   
%     \rule{15.3cm}{0.01cm}
%     \caption{Rules for \#}
%     \label{fig:freshness-prob}
% \end{figure}

% \begin{figure}[!t]
%     \centering
%     \includegraphics[scale=0.44]{rules9.png}
%     \caption{Rules for \#}
%     \label{fig:freshness-prob}
% \end{figure}

% \begin{figure}[!t]
% \centering
\rule{15.3cm}{0.01cm}
{\small
    $$    
    \begin{array}{lrll}
        {\text{\footnotesize ($\caleq$ refl)}} & (\Delta, \theta, Pr \uplus \{s \caleq s \}) & \Longrightarrow & (\Delta, \theta, Pr)\\
        {\text{\footnotesize ($\caleq$ app)}} & (\Delta, \theta, Pr \uplus \{f(\overline{s})_n \caleq f(\overline{t})_n \}) & \Longrightarrow & (\Delta, \theta, Pr \cup \bigcup\{ s_i \caleq t_i \} )\\
        {\text{\footnotesize ($\caleq$ C)}} & (\Delta, \theta, Pr \uplus \{f^\theory{C} s \caleq f^\theory{C} t \}) & \Longrightarrow & (\Delta, \theta, Pr \cup \{ s \caleq v \}), {\text{\footnotesize where \; $s=(s_0,s_1)$}}\\
        &&& {\text{\footnotesize\;and\; $t=(t_0,t_1)$, $v=(t_i, t_{(1-i)} , i = 0,1$ }}\\
        {\text{\footnotesize ($\caleq$ [aa])}} & (\Delta, \theta, Pr \uplus \{[a]s \caleq [a]t \}) & \Longrightarrow & (\Delta, \theta, Pr \cup \{ s \caleq t \})\\
        {\text{\footnotesize ($\caleq$ [ab])}} & (\Delta, \theta, Pr \uplus \{[a]s \caleq [b]t \}) & \Longrightarrow & (\Delta, \theta, Pr \cup \{ s \caleq (a\ b)\cdot t, a \# t \})\\
        {\text{\footnotesize ($\caleq$ inst)}} & (\Delta, \theta, Pr \uplus \{ \pi \cdot X \caleq t \}) & \Longrightarrow & (\Delta, \theta', Pr[X\mapsto\pi^{-1}\cdot t] \cup \underset{\overset{Y \in dom(\theta'),}{a \# Y \in \Delta}}{\bigcup} \{a \# Y\theta' \}),\\
        & && {\text{\footnotesize let\; $\theta':= \theta[X \mapsto \pi^{-1} \cdot t]$,}}\\
        & && {\text{\footnotesize if \; $X \not\in Var(t)$} }\\
        {\text{\footnotesize ($\caleq$ inv)}} & (\Delta, \theta, Pr \uplus \{ \pi \cdot X \caleq \pi'\cdot X \}) & \Longrightarrow & (\Delta, \theta, Pr \cup \{ (\pi')^{-1} \circ \pi \cdot X \caleq X \})\;\\
        & & & {\text{\footnotesize if \; $\pi' \neq \texttt{Id}$}}\\
\end{array}
$$    
}
\rule{15.3cm}{0.01cm}
    \caption{Simplification rules for $\#$ and $\caleq$. $\uplus$ denotes disjoint union}
    \label{fig:freshness-and-equal-simp}
\end{figure}

% \begin{figure}[!t]
%     \centering
%     \includegraphics[scale=0.39]{rules12.png}
%     \caption{Rules for $\caleq$}
%     \label{fig:equal-simp}
% \end{figure}

The following example illustrates the use of the nominal \theory{C}-unification algorithm to solve a nominal \theory{C}-unification problem.

\begin{example}\label{ex:c-unification} Let $\Sigma = \{h:1, f^\theory{C}:2, \oplus: 2\}$ be a signature, where $f^\theory{C}$ and $\oplus$ are commutative symbols, i.e., and $\theory{C} =\{ \ \vdash f^C(X,Y) \approx f^\theory{C}(Y,X), \ \vdash X\oplus Y \approx Y\oplus X\}$ be the axioms defining the theory. Consider the \theory{C}-unification problem
     $(\emptyset \vdash h(Y)) \; _?{\overset{\theory{C}}{\approx}}_? \; (\emptyset \vdash h(f^\theory{C}([b][a]X,X)))$  which has the associated triple $(\emptyset, \mathtt{Id}, \{h(Y) \; _?{\overset{\theory{C}}{\approx}}_? \; h(f^\theory{C}([b][a]X,X))\})$. By applying the rules from Figure~\ref{fig:freshness-and-equal-simp} we get the following:
{\small
$$
\begin{array}{l}
    (\emptyset, \mathtt{Id}, \{h(Y) \; _?{\overset{\theory{C}}{\approx}}_? \; h(f^\theory{C}([b][a]X,X))\})  
    \Longrightarrow_{(\caleq \text{\ app})}  (\emptyset, \mathtt{Id}, \{Y \; _?{\overset{\theory{C}}{\approx}}_? \; f^\theory{C}([b][a]X,X)\}) \\
    \qquad \qquad \qquad \Longrightarrow_{(\caleq \text{\ inst})}  (\emptyset, \theta_0 = [Y\mapsto f^\theory{C}([b][a]X,X)], \{f^\theory{C}([b][a]X,X) \; _?{\overset{\theory{C}}{\approx}}_? \; f^\theory{C}([b][a]X,X)\})  \\
    \qquad \qquad \qquad \Longrightarrow_{(\caleq \text{\ refl})}  (\emptyset, \theta_0 = [Y\mapsto f^\theory{C}([b][a]X,X)], \emptyset)\\
\end{array}
$$
}
Thus, we get the \theory{C}-solution $(\emptyset, \theta_0)$.

Now consider the \theory{C}-unification problem 
$(\emptyset\vdash f^\theory{C}([a][b]Z,Z)) \; _?{\overset{\theory{C}}{\approx}}_? \; (\emptyset\vdash  f^\theory{C}([b][a]X,X))$,
which has the associated triple $(\emptyset, \mathtt{Id}, \{f^\theory{C}([a][b]Z,Z)) \; _?{\overset{\theory{C}}{\approx}}_? \; f^\theory{C}([b][a]X,X)\})$. 
Using the Nominal \theory{C}-unification algorithm we get the following:
{\small
$$
\begin{array}{l}
    (\emptyset, \mathtt{Id}, \{f^\theory{C}([a][b]Z,Z)) \; _?{\overset{\theory{C}}{\approx}}_? \; f^\theory{C}([b][a]X,X)\}) \quad \Longrightarrow_{(\caleq \text{\ C})} \\ 
    \qquad \qquad  \qquad \Longrightarrow_{(\caleq \text{\ C})}  (\emptyset, \mathtt{Id}, \{[a][b]Z \; _?{\overset{\theory{C}}{\approx}}_? \; [b][a]X, Z \; _?{\overset{\theory{C}}{\approx}}_? \; X\}) \\
    \qquad \qquad \qquad \Longrightarrow_{(\caleq \text{\ inst})}  (\emptyset, \theta_1 = [Z\mapsto X], \{[a][b]X \; _?{\overset{\theory{C}}{\approx}}_? \; [b][a]X, X \; _?{\overset{\theory{C}}{\approx}}_? \; X\})  \\
    \qquad \qquad \qquad \Longrightarrow_{(\caleq \text{\ refl})}  (\emptyset, \theta_1, \{[a][b]X \; _?{\overset{\theory{C}}{\approx}}_? \; [b][a]X \}) \\
    \qquad \qquad \qquad \Longrightarrow_{(\caleq \text{\ [ab]})}  (\emptyset, \theta_1, \{[b]X \; _?{\overset{\theory{C}}{\approx}}_? \; (a\ b)\cdot[a]X, a\# [a]X \}) \\
    \qquad \qquad \qquad \Longrightarrow_{(\# \text{\ a[a]})}  (\emptyset, \theta_1, \{[b]X \; _?{\overset{\theory{C}}{\approx}}_? \; [b](a\ b)\cdot X \}) \\
    \qquad \qquad \qquad \Longrightarrow_{(\caleq \text{\ [bb]})}  (\emptyset, \theta_1, \{X \; _?{\overset{\theory{C}}{\approx}}_? \; (a\ b)\cdot X \}) \qquad\qquad \text{(Fixed-point problem)} \\
\end{array}
$$
}
Observe that the first step uses the rule $(\caleq \text{\ C})$, which yields two branches, but here, we are interested in analysing only one branch.

The fixed-point problem has infinite solutions, for example:
\begin{itemize}
    \item $(\{a\#X,b\#X\}, \rho_1)$: instances $\rho_1$ of $X$ that do not contain free occurrences of $a$ or $b$. E.g. for $\rho_1=[X\mapsto g(e)]$, we have  $X\rho_1=g(e) \caleq g(e)= (a\ b) \cdot (X\rho_1) $.
    % $(\{a\#X,b\#X\}, \rho_1 = \texttt{Id})$:
%     {\small
% $$
% \alert{inverse rule???}
% \begin{array}{l}
%     (\emptyset, \texttt{Id}, \{X \; _?{\overset{\theory{C}}{\approx}}_? \; (a\ b)\cdot X \}) \quad \Longrightarrow_{(\text{inverse \ } \aleq \text{\ var})} \\ 
%     \qquad \qquad  \qquad \Longrightarrow_{(\text{inverse \ } \aleq \text{\ var})} (\emptyset, \texttt{Id}, \{ds(id,(a\ b))\#X \}) \\
%     \qquad \qquad \qquad \Longrightarrow_{(\# \text{\ var})}^2  (\{a\#X,b\#X\}, \texttt{Id}, \emptyset) \\
% \end{array}
% $$
% }
     \item $(\emptyset, \rho_2 
= [X\mapsto a \oplus b])$: since $X\rho_2 = a\oplus b \caleq b \oplus a = (a\ b)\cdot X\rho_2$
    \item $(\emptyset, \rho_3 
= [X\mapsto (a \oplus b)\oplus(a \oplus b)])$: since
$X\rho_3 = (a \oplus b)\oplus(a \oplus b) \caleq (b \oplus a)\oplus(b \oplus a) = (a\ b)\cdot X\rho_3.$ 
\end{itemize}
\end{example}

\section{Nominal \theory{E}-rewriting and \theory{E}-narrowing.} \label{sec:e-rewriting}

In this section, we introduce our novel definitions of {\em equational nominal rewriting systems} (ENRS) and {\em nominal equational narrowing}, sometimes abbreviated to nominal \theory{E}-rewriting systems and nominal \theory{E}-narrowing.

\subsection{Nominal \theory{E}-rewriting}
An {\em equational nominal rewrite system} (ENRS)   is a set of (nominal) identities $\theory{T}$ that can be split into a set $\theory{R}$ of nominal rewrite rules and a set $\theory{E}$ of identities. Sometimes, we will denote this decomposition as 
%Recall that an equational term rewriting system~(ETRS), denoted 
$\theory{R}{\cup}\theory{E}$.
%is a set consisting of a theory $\theory{T}$ containing a set of axioms that 
% We start by extending the definition of Equational Term Rewriting Systems (ETRS) to the nominal terms with respect to the theory \(\theory{C}\). Below, $[t]_{\approx_{\theory{C}}}$, denotes the equivalence class of the nominal term $t$ modulo \theory{C}, i.e., $[t]_{\approx_{\theory{C}}}=\{t' \mid t'\approx_{\alpha, \theory{C}}t\}$.
%
%Extending the definition of ETRS to nominal terms with respect to a theory \theory{E}, we obtain the following definition. 
% \alert{Below, $[t]_{\theory{E}}$, denotes the equivalence class of the nominal term $t$ modulo \theory{E}, i.e., $[t]_{\theory{E}}=\{t' \mid t'\approx_{\alpha, \theory{E}}t\}$.}

%\daniele{We need to be careful, sometimes we write $\approx_\theory{E}$ and sometimes $\ealeq$.}
% \alert{
\begin{definition}[Nominal \theory{R/E}-rewriting]\label{def:RE-rewriting}
Let $\theory{T}=\theory{R}{\cup}\theory{E}$ be an ENRS. The relation $\to_\theory{R/E}$ is induced by the composition $\ealeq \circ \to_\theory{R}\circ \ealeq$. A nominal term-in-context $\Delta \vdash s$ reduces with   $\to_{\theory{R/E}}$, when a term in its $\theory{E}$-equivalence class reduces via $\to_{\theory{R}}$ as below:
\begin{center}
$\Delta \vdash (s \to_{\theory{R/E}} t)$ iff there exist $s',t'$ such that $\Delta \vdash (s \ealeq s' \to_\theory{R} t' \ealeq t)$.
\end{center}
%That said, we call $\theory{R/E}$ an \emph{equational nominal rewrite system} (ENRS).
%In particular, $\theory{R/C}$ is a commutative nominal rewrite system.

If $\Delta \vdash s \to_\theory{R/E}^* t$ and $\Delta \vdash s \to_\theory{R/E}^* u$, then we say  that $\theory{R}$ is \emph{$\theory{E}$-confluent} when there exist terms $t',u'$ such that $\Delta \vdash t \to_\theory{R/E}^* t'$, $\Delta \vdash u \to_\theory{R/E}^* u'$ and $\Delta\vdash t'\ealeq u'$.
Also, $\theory{R}$ is said to be \emph{$\theory{E}$-terminating} if there is no infinite $\to_\theory{R/E}$ sequence.
 $\theory{R}$ is called \emph{$\theory{E}$-convergent} if it is $\theory{E}$-confluent and $\theory{E}$-terminating.
\end{definition}
% }
The following example illustrates an ENRS for the set of identities that define the prenex normal form of a first-order formula. We consider the commutativity of the connectives $\wedge$ and $\vee$.
\begin{example}[Prenex normal form rules]\label{ex:prenex-rules}
    Consider the signature for the first-order logic $\Sigma = \{\forall, \exists, \lnot, \land,\linebreak
    \lor \}$, let $\theory{C} = \{\;\vdash P \lor Q \approx Q \lor P, \;\vdash P \land Q \approx Q \land P\}$ be the commutative theory. The prenex normal form rules can be specified by the following set $\theory{R}$ of nominal rewrite rules:
\[
\begin{array}{rl}
    a \# P & \vdash \; P \land \forall[a]Q \rightarrow \forall [a](P \land Q)\\
    % a \# P & \vdash \; (\forall[a]Q) \land P \rightarrow \forall[a](Q \land P)\\
    a \# P & \vdash \; P \lor \forall[a] Q \rightarrow \forall[a] (P \lor Q)\\
    % a \# P & \vdash \; (\forall[a]Q) \lor P \rightarrow \forall[a](Q \lor P)\\
    a \# P & \vdash \; P \land \exists[a]Q \rightarrow \exists[a](P \land Q)\\
    % a \# P & \vdash \; (\exists[a]Q) \land P \rightarrow \exists[a](Q \land P)\\
    a \# P & \vdash \; P \lor \exists[a]Q \rightarrow \exists[a](P \lor Q)\\
    % a \# P & \vdash \; (\exists[a]Q) \lor P \rightarrow \exists[a](Q \lor P)\\
    & \vdash \; \lnot (\exists[a]Q) \rightarrow \forall[a] \lnot Q\\
    & \vdash \; \lnot (\forall[a]Q) \rightarrow \exists[a] \lnot Q\\
\end{array}
\]

\end{example}
%\begin{remark}\label{rmk:e-coherence}
% As in the first-order case,  for a general theory $\theory{E}$,  one has to consider that $\theory{E}$-congruence classes may be infinite, and then $\to_{\theory{R/E}}$ may not be decidable.

Note that in Definition~\ref{def:RE-rewriting}, the relation $\to_\theory{R/E}$ deals with $\alpha,\theory{E}$-congruence classes and they are always infinite due to the availability of names for $\alpha$-renaming. Although  the pure $\approx_\alpha$ relation is decidable,
%(just use the rules in Figure~\ref{fig:freshnessrelation} and Figure~\ref{fig:equalrelation}), 
when $\aleq$ is put together with an equational theory $\theory{E}$ which contains infinite congruence classes, the relation $\to_{\theory{R/E}}$ may not be decidable (as in standard first-order rewriting modulo $\theory{E}$).  We will define the nominal relation  $\to_{\theory{R,E}}$ that deals with nominal $\theory{E}$-matching instead of inspecting the whole $\alpha,\theory{E}$-congruence class of a term. 

%We define $\to_{\theory{R,E}}$ using $\theory{E}$ as the commutative theory \theory{C}.

% To address this problem,  we follow the first order approach and define a relation $\to_{\theory{R,E}}$ which deals with nominal \(\theory{E}\)-matching: 
% \begin{center}
% \includegraphics[scale=0.75]{defRE.png}  
% \end{center}

\begin{definition}[Nominal \theory{R,E}-rewriting]\label{def:rew-modC} 
% \daniele{(replace the C for E)}
The \emph{one-step \theory{E}-rewrite relation} $\Delta\vdash s\rightarrow_{\theory{R,E}} t$ is the least relation such that for any $R = (\nabla\vdash l\rightarrow r)\in \theory{R}$, position $\context{C}$, term $s'$, permutation $\pi$, and substitution $\theta$,
    \begin{prooftree}
    \AxiomC{$s\equiv \context{C}[s']$}
    \AxiomC{$\Delta\vdash \big( \nabla\theta,\ s' \ealeq \pi\cdot (l\theta), \ \context{C}[\pi\cdot (r\theta)] \aleq t\big)$}
    \BinaryInfC{$\Delta\vdash s\rightarrow_{\theory{R,E}}\  t $}
    \end{prooftree} 
    
The \theory{E}-\emph{rewrite relation} $\Delta \vdash s \rightarrow_{\theory{R,E}}^* t$ is
%is the reflexive transitive closure of $\rightarrow_{\theory{R,E}}$ relation, that is,
the least relation that includes $\rightarrow_{\theory{R,E}}$ and is closed by reflexivity and transitivity of $\to_{\theory{R,E}}$, i.e., it satisfies:
\begin{enumerate}
    \item  for all $\Delta, s, s'$ we have $\Delta \vdash s \rightarrow_{\theory{R,E}}^* s'$ if $\Delta \vdash s \aleq s'$;
    \item  for all $\Delta, s, t, u$ we have that $\Delta \vdash s \rightarrow_{\theory{R,E}}^* t$ and $\Delta \vdash t \rightarrow_{\theory{R,E}}^* u$ implies $\Delta \vdash s \rightarrow_{\theory{R,E}}^* u$.
\end{enumerate}

% \alert{
% If $\Delta \vdash s \to_\theory{R/E}^* t$ and $\Delta \vdash s \to_\theory{R/E}^* u$, then we say  that $\theory{R}$ is \emph{$\theory{E}$-confluent} when there exist terms $t',u'$ such that $\Delta \vdash t \to_\theory{R/E}^* t'$, $\Delta \vdash u \to_\theory{R/E}^* u'$ and $\Delta\vdash t'\ealeq u'$.
If
$\Delta \vdash s \to_\theory{R,E}^* t$ and $\Delta \vdash s \to_\theory{R,E}^* u$, then we say that \emph{\theory{R,E} is \theory{E}-confluent} when there exist terms $t',u'$ such that $\Delta \vdash t \to_\theory{R,E}^* t'$, $\Delta \vdash u \to_\theory{R,E}^* u'$ and $\Delta\vdash t'\ealeq u'$.
 % }
\end{definition}
 A term $t$ is said to be in {\em \theory{R,E}-normal} form (\theory{R/E}-normal form) whenever one cannot apply another step of $\to_\theory{R,E}$ ($\to_\theory{R/E}$).

% \begin{definition}[\cite{JouannaudKK83:Incremental}]
% We say that a term $t$ is $\theory{R,E}$-reducible at position $\context{C}$ with the rule $R_i = r_i \to l_i$ and we write $t \to^\theory{R,E}_{[\context{C},R_i]} t'$ iff there exists a $\theory{E}$-match $\theta$ from $r_i$ to $\context{C}[t]$ such that $t' = \context{C}[l_i\theta]$.
    
% \end{definition}
\begin{example}[Cont. Example~\ref{ex:prenex-rules}]
    This example illustrates the  one-step \theory{C}-rewrite: $$a\#P' \vdash S'\lor (\exists [a]Q' \lor P') \to_{\theory{R,C}} S'\lor (\exists [a](P'\lor Q'))$$
    with the rule $a\#P \vdash P\lor \exists[a]Q \to \exists[a](P\lor Q)$.
    In fact,
    \begin{itemize}
        \item $\Delta = \{a\#P'\}$ and $\nabla = \{a\#P\}$;
        \item $s = S'\lor (\exists [a]Q' \lor P') \equiv \context{C}[\exists [a]Q' \lor P'] \equiv \context{C}[s']$;
    \end{itemize}
    If we fix $\pi=id$ and $\theta = [P \mapsto P', Q \mapsto Q']$ we have:
    \begin{itemize}
        \item $\Delta = a\#P' \vdash a\#P' = (a\#P)[P \mapsto P', Q \mapsto Q'] = \nabla\theta$;
        \item  $s' = \exists [a]Q' \lor P' \caleq (P \lor \exists [a]Q)[P \mapsto P', Q \mapsto Q'] = l\theta = \pi\cdot (l\theta)$;
        \item $\context{C}[\pi\cdot(r\theta)] = \context{C}[r\theta] = \context{C}[(\exists [a](P\lor Q))[P \mapsto P', Q \mapsto Q']] = \context{C}[\exists [a](P'\lor Q')] = S' \lor (\exists [a](P'\lor Q')) \aleq t$
    \end{itemize}
    Thus, $a\#P' \vdash S'\lor (\exists [a]Q' \lor P') \to_{\theory{R,C}} S'\lor (\exists [a](P'\lor Q'))$.

    Since $\lor$ is a commutative symbol, we could reduce the initial term to three other possible terms because we have two occurrences of the disjunction. Thus, we can ``permute'' the subterms inside the rewriting modulo $\theory{C}$.
\end{example}
% \daniele{Let's change the flow: We have just defined two nominal rewriting relations. Now we argue that they are related via nominal \theory{E}-coherence by the Conjecture ~\ref{prop:nom-e-coherence}, but we will call it Proposition. We can make a comment about the corresponding results in first-order language.}

% \begin{itemize}
%     \item Define nominal \theory{E}-coherence.\dani{done}
%     \item State the lemma in the nominal syntax.\dani{done}
% \end{itemize}
\begin{remark}\label{rmk:e-coherence-prop}
Following the approach by Jouannaud et al. ~\cite{JouannaudKK83:Incremental}, $\theory{E}$-confluence is a consequence of relating $\to_{\theory{R/E}}$ and $\to_{\theory{R,E}}$, which relies on a property called {\em $\theory{E}$-coherence} which will be extended here, to the nominal framework.
% \begin{center}
%      \includegraphics[scale=0.71]{defcoherence2.png}
% \end{center}

 % \alert{Define nominal \theory{E}-coherence. }

% \begin{definition} \daniele{Move this to the appendix.}
%     $\to_\theory{R,E}$ (or simply \theory{R,E}) is said to be \emph{\theory{E}-coherent} iff: $\forall t_1,t_2,t_3$ such that $t_1 \approx_\theory{E} t_2$ and $t_1 \to_\theory{R,E} t_3$, $\forall t_4,t_5,t_6$ such that $t_3\to^*_\theory{R/E} t_4$, $t_2\to_\theory{R,E} t_5 \to^*_\theory{R/E} t_6$ and $t_4 \approx_\theory{E} t_6$.
% \end{definition}

% % \vspace{-1mm}
% % $$
% %  \xymatrix{
% % t_1\ar[r]_{\theory{R,E}} & t_3\ar[r]^{*}_{\theory{R/E}} & t_4 \ar@{-->}[r]^{*}_{\textcolor{teal}{\theory{R/E}}} & t_5\\
% % t_2\ar@{~}[u]_{\approx_{\theory{E}}}\ar@{-->}[r]_{\textcolor{teal}{\theory{R,E}}}  & t_6\ar@{-->}[rr]^{*}_{\textcolor{teal}{\theory{R/E}}} && t_7\ar@{~~}[u]_{\textcolor{teal}{\approx_{\theory{E}}}} 
% % } 
% % $$
% \begin{center}
%     \includegraphics[scale=.4]{Figs/E-coherenceFO.png}
% \end{center}

% We would like to stress the abuse of notation above, since $t_3 \to_{\theory{R/E}}^* t_4$, for instance, should be written $[t_3]_{\approx_\theory{E}} \to_{\theory{R/E}}^* [t_4]_{\approx_\theory{E}}$, and is acting as an abbreviation for $t_3\approx_\theory{E}t_3'\to_{\theory{R}} t_4'\approx_{\theory{E}}t_4$.
\end{remark}

\vspace{2mm}
\noindent

% \alert{Check the definitions of coherence and the statement of Theorem 3.6}

 \begin{definition}[Nominal \(\theory{E}\)-Coherence] \label{def:e-coherence}
The relation $\Delta\vdash \_ \to_{\theory{R,E}} \_ $ is called \emph{$\theory{E}$-coherent} iff for all $ t_1, t_2, t_3$ such that $\Delta\vdash t_1\approx_{\alpha,\theory{E}} t_2$ and $\Delta\vdash t_1 \to_{\theory{R,E}} t_3$, there exist $t_4, t_5, t_6$ such that $\Delta\vdash t_3 \to_{\theory{R,E}}^* t_4$, $t_2 \to_{\theory{R,E}} t_5 \to_{\theory{R,E}}^* t_6$ and $\Delta\vdash t_4 \ealeq t_6$, for some $\Delta$.
%The idea of nominal $\theory{C}$-coherence is similar to the one in first-order  and illustrated below:
%\begin{figure}[!h]
\end{definition}
\begin{center}
% {\small 
% \boxed{
% $$
%  \xymatrix{
% \Delta\vdash t_1 \ar[r]_{\theory{R,E}} & t_3\ar[r]^{*}_{\theory{R/E}} & t_4 \ar@{-->}[r]^{*}_{\textcolor{teal}{\theory{R/E}}} & t_5\\
% \Delta\vdash t_2 \ar@{~}[u]_{\ealeq}\ar@{-->}[r]_{\textcolor{teal}{\theory{R,E}}}  & t_6\ar@{-->}[rr]^{*}_{\textcolor{teal}{\theory{R/E}}} && t_7\ar@{~~}[u]_{\textcolor{teal}{\approx_{\alpha,\theory{E}}}} 
% } 
% $$
% }}
\boxed{\includegraphics[scale=.5]{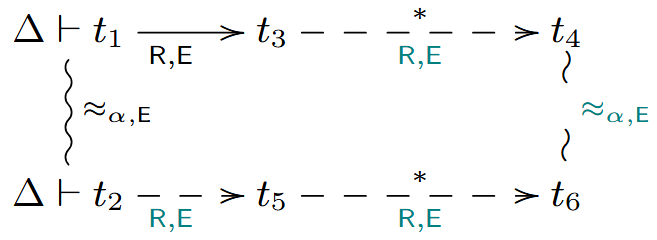}}
\end{center}
%\end{mdframed}
%\caption{
%\end{figure}

\vspace{2mm}
The diagram above illustrates nominal  \theory{E}-coherence: the dashed lines represent existentially quantified reductions.

\begin{definition}
An equational theory $\theory{E}$ is called {\em a first-order equational theory} iff $\theory{E}$ is defined via a set of first-order axioms, i.e., identities of the form $\emptyset\vdash l=r$, where $l,r$ are first-order terms. First-order terms do not contain atoms, abstractions and suspended permutations on variables. 
\end{definition}

% \daniele{Add the nominal version of this proposition.} \dani{It appears in~\ref{prop:nom-e-coherence} after the definition of nominal \theory{E}-coherence. Should we not comment this proposition in first order? Or just add the proof in~\ref{prop:nom-e-coherence} ?}

%return \alert{We need this nominal version of the Proposition in first order given by Jouannaud~\cite{Jouannaud83:ConfluentandCoherent} to prove the lifting theorem for commutativity.}

\begin{theorem}\label{prop:nom-e-coherence}
Let \theory{E}
be a first-order theory, 
% and 
\theory{R} be a nominal rewrite system that is
% \theory{E}-confluent and
\theory{E}-terminating
and \theory{R,E} be \theory{E}-confluent. 
Then the \theory{R,E}- and \theory{R/E}-normal forms of any term $t$ are \theory{E}-equal iff $\to_{\theory{R,E}}$ is \theory{E}-coherent.
\end{theorem}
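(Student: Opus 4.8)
The plan is to argue entirely at the level of the abstract relations $\to_{\theory{R,E}}$ and $\to_{\theory{R/E}}$, using the first-order hypothesis on $\theory{E}$ only to guarantee that $\ealeq$ is a congruence, stable under the fixed context $\Delta$ and compatible with substitution (Proposition~\ref{lem:nf-subst} and Definition~\ref{def:e-compatible}), so that the usual manipulations of contexts and redexes are licensed. First I would record four structural facts. (i) $\to_{\theory{R}}\subseteq\to_{\theory{R,E}}\subseteq\to_{\theory{R/E}}$, since $\aleq\,\subseteq\,\ealeq$ and $\ealeq$ is a congruence (a one-step $\theory{R,E}$-rewrite of $\context{C}[s']$ with $s'\ealeq\pi\cdot(l\theta)$ is realised by $\context{C}[s']\ealeq\context{C}[\pi\cdot(l\theta)]\to_{\theory{R}}\context{C}[\pi\cdot(r\theta)]$). (ii) $\to_{\theory{R/E}}\;=\;\ealeq\circ\to_{\theory{R,E}}\circ\ealeq$, and hence $\ealeq\circ\to_{\theory{R/E}}\circ\ealeq=\to_{\theory{R/E}}$, i.e.\ $\to_{\theory{R/E}}$ absorbs $\ealeq$ on both sides. (iii) $\theory{E}$-termination of $\to_{\theory{R/E}}$ transfers to $\to_{\theory{R,E}}$ by (i), so both relations are terminating. (iv) $\theory{E}$-confluence together with termination makes the normal form $t\!\downarrow_{\theory{R,E}}$ unique up to $\ealeq$. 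These reduce the theorem to a statement about normal forms, and make precise the reading ``$t\!\downarrow_{\theory{R,E}}\ealeq t\!\downarrow_{\theory{R/E}}$'' as: every $\theory{R,E}$- and every $\theory{R/E}$-normal form of $t$ are mutually $\ealeq$-equal.

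For the direction coherence $\Rightarrow$ equal normal forms, the core is a lemma: if $\to_{\theory{R,E}}$ is $\theory{E}$-coherent then $x\ealeq y$ implies $x\!\downarrow_{\theory{R,E}}\ealeq y\!\downarrow_{\theory{R,E}}$. I would prove this by well-founded induction on $x$ with respect to $\to_{\theory{R,E}}^{+}$ (the transitive closure, well-founded by termination). If $x$ is a normal form, coherence forces $y$ to be normal as well, since any step $y\to_{\theory{R,E}}y'$ applied to $y\ealeq x$ would, by coherence, produce a genuine step out of $x$; so $x\ealeq y$ are already the normal forms. Otherwise pick a step $x\to_{\theory{R,E}}x_1$ and apply coherence to $x\ealeq y$ and this step, obtaining $x_1\to_{\theory{R,E}}^{*}t_4$, $y\to_{\theory{R,E}}t_5\to_{\theory{R,E}}^{*}t_6$ and $t_4\ealeq t_6$. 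Confluence gives $x\!\downarrow_{\theory{R,E}}\ealeq t_4\!\downarrow_{\theory{R,E}}$ and $y\!\downarrow_{\theory{R,E}}\ealeq t_6\!\downarrow_{\theory{R,E}}$, and since $t_4$ is a \emph{proper} $\to_{\theory{R,E}}$-reduct of $x$ (via $x\to_{\theory{R,E}}x_1\to_{\theory{R,E}}^{*}t_4$) the induction hypothesis applied to $t_4\ealeq t_6$ yields $t_4\!\downarrow_{\theory{R,E}}\ealeq t_6\!\downarrow_{\theory{R,E}}$, closing the chain. Granting the lemma, I would take any $t$ and any $\theory{R/E}$-normal form $v$ of $t$, decompose $t\to_{\theory{R/E}}^{*}v$ step by step as $\ealeq\circ\to_{\theory{R,E}}\circ\ealeq$ using (ii), and show along the chain that the $\theory{R,E}$-normal form is preserved (each $\ealeq$-transition handled by the lemma, each $\to_{\theory{R,E}}$-transition by confluence); as $v$ is $\theory{R/E}$-normal it is a fortiori $\theory{R,E}$-normal by (i), whence $v\ealeq v\!\downarrow_{\theory{R,E}}\ealeq t\!\downarrow_{\theory{R,E}}$, i.e.\ $t\!\downarrow_{\theory{R/E}}\ealeq t\!\downarrow_{\theory{R,E}}$.

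For the converse, assume the $\theory{R,E}$- and $\theory{R/E}$-normal forms of every term agree up to $\ealeq$, and verify coherence directly. Given $t_1\ealeq t_2$ and $t_1\to_{\theory{R,E}}t_3$, fact (ii) shows $\to_{\theory{R/E}}$ absorbs $\ealeq$, so $t_1$ and $t_2$ have $\ealeq$-equal $\theory{R/E}$-normal forms; the hypothesis then forces $t_1\!\downarrow_{\theory{R,E}}\ealeq t_2\!\downarrow_{\theory{R,E}}$. Put $t_4=t_3\!\downarrow_{\theory{R,E}}$; since $t_1\to_{\theory{R,E}}t_3$, confluence gives $t_4\ealeq t_1\!\downarrow_{\theory{R,E}}\ealeq t_2\!\downarrow_{\theory{R,E}}$, and $t_3\to_{\theory{R,E}}^{*}t_4$ is the first required reduction. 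The delicate point is producing the mandatory first step $t_2\to_{\theory{R,E}}t_5$: I would argue $t_2$ cannot be $\theory{R,E}$-normal, for otherwise $t_2=t_2\!\downarrow_{\theory{R,E}}$ while $t_2\ealeq t_1\to_{\theory{R,E}}t_3$ makes $t_2\to_{\theory{R/E}}t_3$ a genuine step, so $t_2$ is not $\theory{R/E}$-normal and any $\theory{R/E}$-normal form $w$ satisfies $t_2\to_{\theory{R/E}}^{+}w\ealeq t_2$ (using the hypothesis); absorbing $\ealeq$ on the left of this cycle yields $w\to_{\theory{R/E}}^{+}w$, an infinite $\to_{\theory{R/E}}$ sequence, contradicting $\theory{E}$-termination. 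Hence $t_2\to_{\theory{R,E}}t_5$ for some $t_5$; setting $t_6=t_5\!\downarrow_{\theory{R,E}}$, confluence gives $t_6\ealeq t_2\!\downarrow_{\theory{R,E}}\ealeq t_4$, completing the coherence diagram.

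The main obstacle is the coherence-propagation lemma of the $\Leftarrow$ direction: pushing a single $\ealeq$-transition past a rewrite step via coherence does not by itself align normal forms, and one must use termination to set up the well-founded induction and confluence to repeatedly realign the coherence-produced reducts $t_4,t_6$; getting the induction measure right (a proper $\to_{\theory{R,E}}$-reduct of $x$, not merely of its one-step successor) is the crux. A secondary subtlety, in the $\Rightarrow$ direction, is justifying the compulsory first $\theory{R,E}$-step demanded by our formulation of \theory{E}-coherence, which is exactly where $\theory{E}$-termination is indispensable through the no-$\theory{R/E}$-cycle argument. Throughout, the first-order restriction on $\theory{E}$ is what allows all of these manipulations to be carried out with the ambient freshness context $\Delta$ held fixed.
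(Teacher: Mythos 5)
You should know at the outset that the paper contains \emph{no proof} of Theorem~\ref{prop:nom-e-coherence}: the statement is given bare, with only Remark~\ref{rmk:e-coherence-prop} deferring to the first-order treatment of Jouannaud et al.~\cite{JouannaudKK83:Incremental}. So there is no in-paper argument to compare yours against; what you have written supplies the proof the paper leaves implicit, and as far as I can check it is correct. Your facts (i)--(iv) are sound consequences of Definitions~\ref{def:RE-rewriting} and~\ref{def:rew-modC} (congruence of $\ealeq$ is built into the $Ax_{\theory{E}}$ presentation). The forward direction rests on the right invariant: the lemma that, under coherence, $x \ealeq y$ implies $x\downarrow_{\theory{R,E}} \ealeq y\downarrow_{\theory{R,E}}$, proved by well-founded induction whose measure is proper $\to_{\theory{R,E}}$-descendants of $x$; your base case correctly reads off from Definition~\ref{def:e-coherence} that a term $\ealeq$-equal to a $\theory{R,E}$-normal form must itself be normal, and the inductive step correctly applies the induction hypothesis to $t_4$ (a proper reduct of $x$) paired with $t_6$. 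The converse isolates the genuinely delicate obligation --- the mandatory first step $t_2 \to_{\theory{R,E}} t_5$ demanded by the paper's formulation of coherence --- and your cycle argument ($w \to^{+}_{\theory{R/E}} w$ against $\theory{E}$-termination) is exactly what discharges it; this is where a naive attempt would fail, since without the normal-form hypothesis $t_2$ can perfectly well be $\theory{R,E}$-normal while $t_1$ is reducible.

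Two points deserve tightening. First, the well-foundedness of your induction should be drawn explicitly from $\theory{E}$-termination (no infinite $\to_{\theory{R/E}}$-sequences) rather than from termination of $\to_{\theory{R,E}}$ alone: the paper's $\to^{*}_{\theory{R,E}}$ interleaves $\aleq$-steps, so an infinite descent along your measure is an infinite $\to_{\theory{R/E}}$-sequence only after absorbing those steps via your facts (i)--(ii); the conclusion is unchanged, but the appeal should be made explicit. Second, your closing claim that the first-order restriction on $\theory{E}$ is ``what allows all of these manipulations'' overstates its role: your abstract argument uses only that $\ealeq$ is an equivalence and a congruence containing $\aleq$ under the fixed context $\Delta$, which holds for any $\theory{E}$ presented by $Ax_{\theory{E}}$. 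What first-orderness actually secures --- and what a careful write-up should isolate as a standalone lemma --- is that $\theory{E}$-equality neither creates nor destroys freshness information (e.g.\ $\Delta \vdash a\# s$ and $\Delta \vdash s \ealeq t$ give $\Delta \vdash a \# t$), so that every judgement in your diagrams can legitimately be carried under one unchanging $\Delta$; for identities mentioning atoms or abstractions this can fail.
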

% \begin{proof}
%     The proof is in the Appendix (see~\ref{conj:nom-e-coherence}).
% \end{proof}

% \begin{proposition}[\cite{Jouannaud83:ConfluentandCoherent}]\label{prop:e-coherence} \daniele{Move this to the appendix.}
% Assume \theory{R} is \theory{E}-confluent and \theory{E}-terminating. Then \theory{R,E}- and \theory{R/E}-normal forms of any term $t$ are \theory{E}-equal iff \theory{R,E} is \theory{E}-coherent.
% \end{proposition}

% \begin{proof}
%     \daniele{Add the nominal version of this proof.}
% \end{proof}

In first-order rewriting, it is known that  $\theory{R,E}$-reducibility is decidable if  $\theory{E}$-matching is decidable.  Following Jouannaud et al.~\cite{JouannaudKK83:Incremental},   the existence of a finite and complete $\theory{E}$-unification 
algorithm is a sufficient condition for that decidability. 
However, solving nominal $\theory{E}$-unification problems has the additional complication of dealing with $\alpha$-equality, which significantly impacts obtaining finite and complete sets of nominal $\theory{E}$-unifiers.

\begin{remark}\label{rmk:cunif-notfin}
Nominal $\theory{C}$-unification is not finitary when one uses freshness constraints and substitutions for representing solutions~\cite{OnSolvingNomFPequations/frocos/Washington17}, but the type of problems that generate an infinite set of $\theory{C}$-unifiers are fixed-point equations $\pi\cdot X _?{\overset{\theory{C}}{\approx}}_?  X$. For example, the nominal \theory{C}-unification problem $(a \ b)\cdot X _?{\overset{\theory{C}}{\approx}}_?  X$  has solutions $[X\mapsto a\oplus b], [X\mapsto (a\oplus b)\oplus (a \oplus b)],\ldots$ (Example~ \ref{ex:c-unification}). However, these problems do not appear in nominal $\theory{C}$-matching, which is finitary~\cite{AFormalisationofNomCmatching/entcs/Ayala-RinconSFN19}. Thus, the relation $\to_{\theory{R,C}}$ is decidable. %We will come back to this issue later on.
\end{remark}

%\daniele{There is a gap that we have to fill because narrowing uses C-unification.}

\subsection{Nominal \theory{E}-narrowing}

Now we define the nominal narrowing relation modulo $\theory{E}$, extending previous works~\cite{NominalNarrowing16}.
\begin{definition}[Nominal \theory{E}-narrowing]\label{def:narr-C} 
% \daniele{(Generalise to E)}
The \emph{one-step \theory{E}-narrowing relation} $(\Delta\vdash s) \rightsquigarrow_{\theory{R,E}} (\Delta' \vdash t)$ is the least relation such that for any $(\nabla\vdash l\rightarrow r)\in \theory{R}$,  position $\context{C}$, term $s'$, permutation $\pi$, and substitution $\theta$, 
    % \vspace{-2mm}
    \begin{prooftree}
    \AxiomC{$s\equiv \context{C}[s']$}
    \AxiomC{$\Delta'\vdash \big(\nabla\theta,~\Delta\theta,~ s'\theta \ealeq \pi\cdot (l\theta), \,(\context{C}[\pi\cdot r])\theta \aleq t\big)$}
    \RightLabel{.}
    \BinaryInfC{$(\Delta\vdash s) \rightsquigarrow_{\theory{R,E}}^\theta (\Delta' \vdash t) $}
    \end{prooftree}
where $(\Delta',\theta) \in {\cal U}_{\theory{E}}( \nabla\vdash l,  \Delta\vdash  s')$.  We will write only $(\Delta\vdash s) \rightsquigarrow_{\theory{R,E}} (\Delta' \vdash t)$, omitting the $\theta$, when it is clear in the context.

The \emph{nominal \theory{E}-narrowing relation} $(\Delta \vdash s) \rightsquigarrow_{\theory{R,E}}^* (\Delta' \vdash t)$ is 
%the reflexive transitive closure of the one-step $\rightsquigarrow_{\theory{R,E}}$ relation, that is,
the least relation that includes $\rightsquigarrow_{\theory{R,E}}$ and is closed by reflexivity and transitivity of $\rightsquigarrow_{\theory{R,E}}$, i.e., it  satisfies:
\begin{enumerate}
    \item for all $\Delta, s, s'$ we have $(\Delta \vdash s) \rightsquigarrow_{\theory{R,E}}^* (\Delta \vdash s')$ if $\Delta \vdash s \aleq s'$;
    \item  for all  $\Delta, \Delta', \Delta'',s, t$ and $u$: if  $(\Delta \vdash s) \rightsquigarrow_{\theory{R,E}}^* (\Delta' \vdash t)$ and $(\Delta' \vdash t) \rightsquigarrow_{\theory{R,E}}^* (\Delta'' \vdash u)$ then $(\Delta \vdash s) \rightsquigarrow_{\theory{R,E}}^* (\Delta'' \vdash u)$.
\end{enumerate}
\end{definition}
The permutation $\pi$ and substitution $\theta$ in the definition above are found by solving the nominal  $\theory{E}$-unification problem $(\nabla\vdash l) \; _?{\overset{\theory{E}}{\approx}}_? \; (\Delta\vdash  s')$.

\begin{remark} Note that decidability of $\rightsquigarrow_\theory{R,E}$ relies on the existence of an algorithm for nominal \theory{E}-unification. 
%that generates a \alert{finite} minimal set of solutions. \alert{Does it need to be finite?? Not for one-step narrowing?} 
In this work, we will focus  on the theory \theory{C}, for which a nominal unification algorithm exists. %\alert{but it is not finitary!}

%From now on, our  are concentrated on  $\to_{\theory{R/C}}, \to_{\theory{R,C}}$ and $\enarrow{R,C}$.
\end{remark}

Since nominal $\theory{C}$-narrowing uses nominal $\theory{C}$-unification, which is not finitary when we use pairs $(\Delta', \theta)$ of freshness contexts and substitutions to represent solutions, following Remark~\ref{rmk:cunif-notfin}, we conclude that our nominal $\theory{C}$-narrowing trees are infinitely branching. The following example illustrates these infinite branches.
\begin{example}[Cont. Example~\ref{ex:c-unification}]\label{ex:narrow}
    Consider the signature $\Sigma = \{h:1, f^\theory{C}:2, \oplus: 2\}$, where $f^\theory{C}$ and $\oplus$ are commutative symbols.
    Let $R=\{ \ \vdash h(Y) \to Y ,\ \vdash f^\theory{C}([a][b]\cdot Z, Z) \to f^\theory{C}(h(Z),h(Z)) \}$ be a set of rewrite\footnote{$\vdash l\to r$ denotes $\emptyset\vdash l\to r$.} rules. 
    Let $\ \vdash h(f^\theory{C}([b][a]X,X))$ be a nominal term that we want to apply nominal \theory{C}-narrowing to.
    Observe that we can apply one step of narrowing, and then we obtain a branch that yields infinite branches due to the fixed-point equation (see Figure~\ref{fig:enter-label5}).

    \begin{figure}[!t]
    \centering
    \includegraphics[scale=.59]{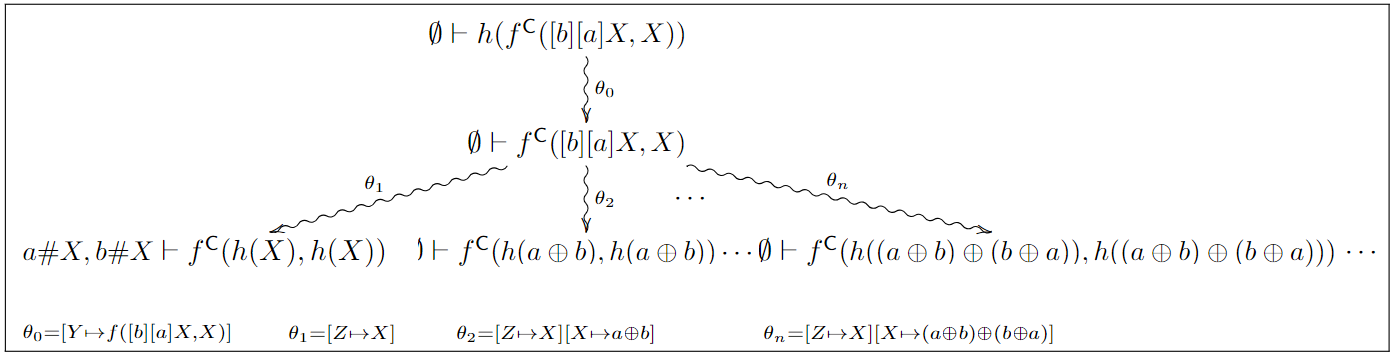}
    \caption{Infinitely branching tree}
    \label{fig:enter-label5}
    \end{figure}
%     \end{example}

% \begin{example}[Cont. Example~\ref{ex:narrow}]\label{ex:contnarrow} \dani{Fix the example}
%     Here we bring all computations made in Example~\ref{ex:narrow}.

    The first narrowing step is $\emptyset \vdash h(f^\theory{C}([b][a]X,X)) \enarrow{R,C} \emptyset \vdash f^\theory{C}([b][a]X,X)$, using the rule $\ \vdash h(Y) \to Y$.
    The substitution $\theta_0= [Y\mapsto f^\theory{C}([b][a]X,X)]$ was computed in Example~\ref{ex:c-unification} when solving the \theory{C}-unification problem  $(\emptyset\vdash h(Y)) \; _?{\overset{\theory{C}}{\approx}}_? \; (\emptyset\vdash  h(f^\theory{C}([b][a]X,X))).$

    %Using the Nominal \theory{C}-unification algorithm we get the following:
% {\small
% $$
% \begin{array}{l}
%     (\emptyset, \mathtt{Id}, \{h(Y) \; _?{\overset{\theory{C}}{\approx}}_? \; h(f([b][a]X,X))\}) \quad \Longrightarrow_{(\caleq \text{\ app})} \\ 
%     \qquad \qquad  \qquad \Longrightarrow_{(\caleq \text{\ app})}  (\emptyset, \mathtt{Id}, \{Y \; _?{\overset{\theory{C}}{\approx}}_? \; f([b][a]X,X)\}) \\
%     \qquad \qquad \qquad \Longrightarrow_{(\caleq \text{\ inst})}  (\emptyset, \theta_0 = [Y\mapsto f([b][a]X,X)], \{f([b][a]X,X) \; _?{\overset{\theory{C}}{\approx}}_? \; f([b][a]X,X)\})  \\
%     \qquad \qquad \qquad \Longrightarrow_{(\caleq \text{\ refl})}  (\emptyset, \theta_0 = [Y\mapsto f([b][a]X,X)], \emptyset)\\
% \end{array}
% $$
% }
% Thus, we get the \theory{C}-solution $(\emptyset, \theta_0)$.

The other infinite narrowing steps are generated due to the fixed-point equation found in the process of solving the \theory{C}-unification problem 
$(\emptyset\vdash f^\theory{C}([a][b]Z,Z)) \; _?{\overset{\theory{C}}{\approx}}_? \; (\emptyset\vdash  f^\theory{C}([b][a]X,X))$, computed in Example~\ref{ex:c-unification}.
Composing the fixed-point solutions with $(\emptyset, \theta_1)$ that we had, we get the substitutions $\theta_1 = [Z\mapsto X]$, $\theta_2 = [Z\mapsto X][X\mapsto a \oplus b]$ and $\theta_3 = [Z\mapsto X][X\mapsto (a \oplus b)\oplus(a \oplus b)]$ of our narrowing steps in Figure~\ref{fig:enter-label5}.
 
\end{example}

%%%%%%%%%%%%%%%%%

% , and illustrate it with two examples.

% \daniele{We have to talk about the infinite branches of our narrowing tree.. } \dani{dito acima da definição 3.2.2}

% If $\Delta \vdash s \to_\theory{R,C}^* t$ and $\Delta \vdash s \to_\theory{R,C}^* t'$, then we say a nominal rewrite system $\theory{R}$ is \emph{$\theory{C}$-confluent} when there exists a term $u$ such that $\Delta \vdash t \to_\theory{R,C}^* u$ and $\Delta \vdash t' \to_\theory{R,C}^* u$.
% Also, $\theory{R}$ is said to be \emph{$\theory{C}$-terminating} if there is no infinite rewrite modulo $\theory{C}$ sequence.
% A NRS $\theory{R}$ is called \emph{$\theory{C}$-convergent} if it is $\theory{C}$-confluent and $\theory{C}$-terminating.
 
 The following proposition shows that each nominal narrowing step corresponds to a nominal rewriting step, using the same substitution $\theta$.
 %is an extension of a remark made in\cite{JouannaudKK83:Incremental} to the nominal syntax and for equational theories \theory{E}.
 % the commutative case.

\begin{proposition}\label{prop:narrow-to-rewriting}
Let \theory{E} be an equational theory for which a complete \theory{E}-unification algorithm exists. $(\Delta_0\vdash s_0) \rightsquigarrow_{\theory{R,E}}^{\theta} (\Delta_1 \vdash s_1) $ implies $\Delta_1\vdash (s_0 \theta ) \rightarrow_{\theory{R,E}}  s_1 $.
\end{proposition}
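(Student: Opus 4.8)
The plan is to reuse, essentially verbatim, the witnessing data of the narrowing step to build the required rewriting step on $s_0\theta$. Unfolding Definition~\ref{def:narr-C}, the hypothesis $(\Delta_0\vdash s_0)\rightsquigarrow_{\theory{R,E}}^{\theta}(\Delta_1\vdash s_1)$ provides a rule $(\nabla\vdash l\to r)\in\theory{R}$, a position $\context{C}$ and subterm $s'$ with $s_0\equiv\context{C}[s']$, a permutation $\pi$, and a unifier $(\Delta_1,\theta)\in{\cal U}_{\theory{E}}(\nabla\vdash l,\Delta_0\vdash s')$ such that
\[
\Delta_1\vdash\big(\nabla\theta,\ \Delta_0\theta,\ s'\theta\ealeq\pi\cdot(l\theta),\ (\context{C}[\pi\cdot r])\theta\aleq s_1\big).
\]
I would then exhibit the rewriting step $\Delta_1\vdash s_0\theta\to_{\theory{R,E}}s_1$ using the \emph{same} rule, the position $\context{C}\theta$ (where $\theta$ acts on $\context{C}$ leaving its hole untouched), the subterm $s'\theta$, the permutation $\pi$, and the substitution $\theta$.

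Two standard structural identities make the premises line up. First, substitution commutes with hole-filling: since the hole variable is reserved and never in $\dom{\theta}$, we have $(\context{C}[u])\theta\equiv(\context{C}\theta)[u\theta]$; in particular $s_0\theta\equiv(\context{C}[s'])\theta\equiv(\context{C}\theta)[s'\theta]$, so $\context{C}\theta$ is a legitimate position of $s_0\theta$ with subterm $s'\theta$. Second, the permutation action commutes with substitution, $(\pi\cdot r)\theta\equiv\pi\cdot(r\theta)$, which I would verify by a routine induction on $r$. Combining the two yields $(\context{C}[\pi\cdot r])\theta\equiv(\context{C}\theta)[\pi\cdot(r\theta)]$.

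With these identities the three premises demanded by Definition~\ref{def:rew-modC} are already present in the narrowing derivation: the freshness side-condition $\Delta_1\vdash\nabla\theta$ is its first conjunct, the matching condition $\Delta_1\vdash s'\theta\ealeq\pi\cdot(l\theta)$ is its third, and the reconstruction condition $\Delta_1\vdash(\context{C}\theta)[\pi\cdot(r\theta)]\aleq s_1$ is its fourth conjunct rewritten through the displayed identity. Thus the chosen instance witnesses $\Delta_1\vdash s_0\theta\to_{\theory{R,E}}s_1$, as required.

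The only genuine content is the commutation lemma $(\pi\cdot r)\theta\equiv\pi\cdot(r\theta)$ and its context companion; everything else is bookkeeping, so this is the step I would treat most carefully. In the induction, the atom and application cases are immediate, the abstraction case uses $\pi\cdot[a]t\equiv[\pi(a)](\pi\cdot t)$ together with the fact that $\theta$ fixes the bound atom, and the suspension case uses $\pi\cdot(\pi'\cdot X)\equiv(\pi\circ\pi')\cdot X$ on both sides. Notably, I do not expect to invoke the $\theory{E}$-compatibility of $\vdash$ by substitutions (Proposition~\ref{lem:nf-subst}) here: no \emph{further} substitution is applied to the judgements, since the narrowing premises are already stated in the context $\Delta_1$ with $\theta$ applied, so they transfer to the rewriting step unchanged.
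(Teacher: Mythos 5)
Your proposal is correct and follows essentially the same route as the paper's proof: unfold the narrowing step and reuse its witnessing data (same rule, permutation $\pi$, substitution $\theta$, and position $\context{C}\theta$) to instantiate Definition~\ref{def:rew-modC}, via the identities $s_0\theta\equiv(\context{C}\theta)[s'\theta]$ and $(\context{C}[\pi\cdot r])\theta\equiv(\context{C}\theta)[\pi\cdot(r\theta)]$, taking the rewrite substitution to be the narrowing substitution $\theta$ itself. The paper compresses the structural identities into ``it is easy to verify'' and fixes the rewrite substitution as $\theta$ exactly as you do; your explicit induction for the permutation/substitution commutation and your observation that $\theory{E}$-compatibility is not needed here are just a more detailed account of the same argument.
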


\begin{proof}
    Indeed, suppose we have $(\Delta_0\vdash s_0) \rightsquigarrow_{\theory{R,E}}^{\theta} (\Delta_1 \vdash s_1) $. The narrowing step guarantees that for a substitution $\theta$, some permutation $\pi$, and a rule $\nabla\vdash l\to r\in \theory{R}$, the following holds:  
    \begin{itemize}
        \item $s_0 \equiv \context{C}[s'_0]$ and $\Delta_1\vdash \big(\nabla\theta,~\Delta_0\theta,~ s'_0\theta \ealeq \pi\cdot (l\theta), \,(\context{C}[\pi\cdot r])\theta \aleq s_1\big)$. 
        % \daniele{This means that $\theta$ satisfies $\Delta_0$ with $\Delta_1$.}
    \end{itemize}

   From the items above, it is easy to verify  the following:
   \begin{itemize} 
   \item $s_0\theta \equiv \context{C}\theta[s'_0\theta]$; and    $\Delta_1 \vdash (\nabla\theta', s_0'\theta \ealeq \pi\cdot (l\theta'), \context{C}\theta[\pi\cdot(r\theta')] \aleq s_1)$, 
   \end{itemize} 
   and  by the definition of rewrite modulo $\theory{E}$,  it implies that  $\Delta_1 \vdash s_0\theta \rightarrow_{\theory{R,E}} s_1$.
    We need to fix the substitution $\theta$ used in the narrowing step as $\theta'$, and the result follows.
\end{proof}
% \begin{tcolorbox}

%%%%%

\section{Nominal Lifting Theorem modulo \theory{E}}\label{section:NomLiftingModuloC}

In this section, we assume $\theory{R}\cup \theory{E}$ an ENRS such that  $\theory{R}=\{\nabla_i\vdash l_i\to r_i\}$ is $\theory{E}$-convergent NRS, \theory{E} is compatible with $\vdash$ and substitutions and that there exists a complete \theory{E}-unification algorithm. We want to extend Proposition~\ref{prop:narrow-to-rewriting}  and establish correspondence between finite sequences of nominal $\theory{E}$-narrowing steps and sequences of nominal $\theory{E}$-rewriting steps. This result corresponds to the classical Lifting Theorem (\cite{Hullot80,JouannaudKK83:Incremental,NominalNarrowing16}) which will be extended to the nominal relations $\rightsquigarrow_{\theory{R,E}}$ and $\to_{\theory{R,E}}$. The Lifting Theorem relates narrowing steps to rewriting steps. It is fundamental to guarantee that one can use the narrowing relation to solve \theory{T}-unification problems when \theory{T} is a convergent equational theory. The extension to the \theory{R\cup E}-Lifting Theorem would allow us to solve nominal unification problems modulo $\theory{R}\cup \theory{E}$.

We start by defining a normalised substitution with respect to the relation $\to_{\theory{R,E}}$:
%

% \alert{Check this $R,E$-normal form and in the proof of Lemma 4.4}
\begin{definition}[Normalised substitution w.r.t $\to_{\theory{R},\theory{E}}$]
A substitution $\theta$ is \emph{normalised in $\Delta$ with relation to $\to_{\theory{R,E}}$} if $\Delta \vdash X\theta$ is a $\theory{R,E}$-normal form  for every $X$. A substitution $\theta$ {\em satisfies the freshness context} $\Delta$ iff there exists a freshness context $\nabla$ such that $\nabla \vdash a\#X\theta$ for each $a\#X \in \Delta$. In this case, we say that $\theta$ satisfies $\Delta$ with $\nabla$. The minimal such $\nabla$ is $\nfpair{\Delta\theta}$.
\end{definition}

% \daniele{TODO: Write an example of a narrowing step with one of the rewrite rules in Example 3.1 and check what happens with the context of the corresponding rewriting step. Use a reduction $\Delta\vdash s \rightsquigarrow t$ with some $\Delta$ that is different from the contexts in the rules. I want to understand this $\Delta_1/\Delta_0$ doubt. } \dani{done}

The following example illustrates the technique used in the proof of Lemma~\ref{lem:correctness}.
\begin{example}\label{ex:correct-example}
    Consider the rules $R_3: a\# P \vdash P\land\exists[a]Q\to \exists[a](P\land Q)$ and $R_6: \emptyset \vdash \neg(\forall[a]Q)\to \exists[a]\neg Q$.

    Let $(\Delta_0\vdash s_0) \rightsquigarrow^{\theta_0}_{R_6} (\Delta_1\vdash s_1) \rightsquigarrow^{\theta_1}_{R_3} (\Delta_2\vdash s_2)$ be a narrowing derivation, illustrated in Figure~\ref{fig:illustrate-example} such that:
    \begin{itemize}
        \item $\Delta_0 \equiv \emptyset$ and $s_0 \equiv P_1\land\neg(\forall[b]Q_1)$
        \item $\Delta_1 \equiv \{a\#Q_1\}$ and $s_1 \equiv P_1\land \exists[a](\neg (a\ b)\cdot Q_1)$ 
        % \daniele{your abstraction is in $[b]$ because you isolated $Q_1$ instead of $Q$ in your unification step.}
        \item $\Delta_2 \equiv \{a\# Q_1, a\#P_1\}$ and $s_2 \equiv \exists[a](P_1\land\neg(a\ b)\cdot Q_1)$ 
        % \daniele{I don't understand why you have two existential. The initial term only had one $\exists$. And the rule does not add existential. Also, you are missing a constraint in $\Delta_2$.}
    \end{itemize}
    Let $\rho$ be a substitution that satisfies $\Delta_2$ with $\Delta$. Then, there exists a rewriting derivation
    $$\Delta\vdash s_0\rho_0 \to_\theory{R,C} s_1\rho_1 \to_\theory{R,C} s_2\rho$$
    where $\Delta\vdash \Delta_0\rho_0$, $\Delta\vdash \Delta_1\rho_1$ and $\rho_0=\theta_0\theta_1\rho$, $\rho_1 = \theta_1\rho$.

    \begin{figure}[!t]
    \begin{mdframed}
    \begin{center}
        \includegraphics[scale=.52]{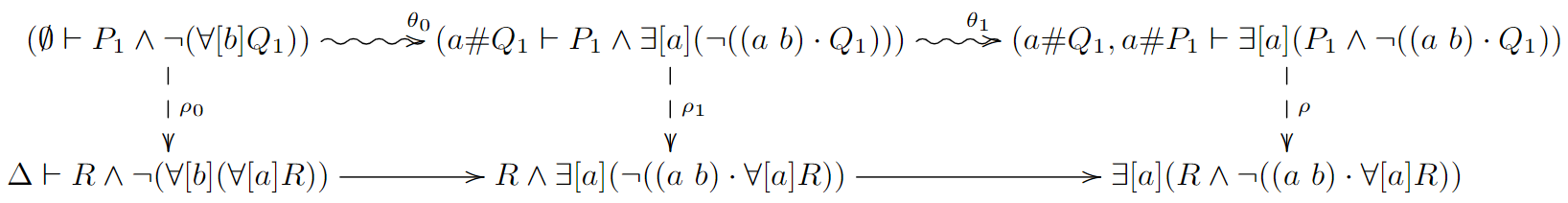}
    \end{center}
    \end{mdframed}
    % \alert{Either you have to use $\rho$ instead of $\rho_2$ or write $\rho_2=\rho$ somewhere.}
    \caption{Illustration of Example~\ref{ex:correct-example}}\label{fig:illustrate-example}
    \end{figure}
    Supposing that $\rho = [Q_1\mapsto \forall[a]R,P_1\mapsto R]$, and $\Delta =\{a\# R\}$, we have
    \begin{itemize}
        \item $\Delta\vdash \Delta_2\rho = \{a\#Q_1,a\#P_1\}\rho = \{a\#\forall[a]R,a\#R\} = \{a\#R\}$
        \item $\theta_0 = [Q \mapsto (a\ b)\cdot Q_1]$
         and  $\theta_1 = [P'\mapsto P_1, Q'\mapsto \neg((a\ b)\cdot Q_1)]$ 

        \item $\rho_1 = \theta_1\rho = [P' \mapsto R, Q'\mapsto \neg((a\ b)\cdot \forall[a]R), Q_1\mapsto \forall[a]R, P_1\mapsto R]$
        \item $\rho_0 = \theta_0\rho_1 = [Q\mapsto(a\ b)\cdot \forall[a]R, P' \mapsto R, Q'\mapsto \neg((a\ b)\cdot \forall[a]R), Q_1\mapsto \forall[a]R, P_1\mapsto R]$
        \item $\Delta\vdash \Delta_1\rho_1 = (a\#Q_1)\rho_1 = a\# \forall[a]R = \emptyset$ and  $\Delta\vdash \Delta_0\rho_0 = \emptyset$
    \end{itemize}
\end{example}

% \alert{POLISH: The next result shows that the correspondence between narrowing and rewriting is preserved by application of substitution.}

The next result shows that the rewriting step generated by the narrowing step
is preserved by application of substitution if the theory \theory{E} is compatible (Definition~\ref{def:e-compatible}), that is, $\vdash$ is closed by substitutions.
\begin{lemma}{($\rightsquigarrow_{\theory{R,E}}$ to $\to_{\theory{R,E}}$)}
\label{lem:correctness}
Let \theory{E} be compatible with $\vdash$ by substitutions and  $(\Delta_0 \vdash s_0) \rightsquigarrow_{\theory{R,E}}^\theta (\Delta_1 \vdash s_1) $. Then, for any substitution $\rho$ that satisfies $\Delta_1$ with $\Delta$,
%, i.e., there exists $\Delta$ such that $\Delta \vdash \Delta_0\rho$,
the following holds
$$\Delta \vdash (s_0 \theta )\rho \rightarrow_{\theory{R,E}}  s_1 \rho$$
%where $\theta$ is the substitution applied in the narrowing step.
In particular, $\Delta$ will be $\nfpair{\Delta_1\rho}$.
\end{lemma}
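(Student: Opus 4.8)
The plan is to combine the one-step narrowing-to-rewriting correspondence (Proposition~\ref{prop:narrow-to-rewriting}) with the \theory{E}-compatibility of $\vdash$ by substitutions (Definition~\ref{def:e-compatible}). First I would unfold the hypothesis $(\Delta_0 \vdash s_0) \rightsquigarrow_{\theory{R,E}}^\theta (\Delta_1 \vdash s_1)$ using Definition~\ref{def:narr-C}: there is a rule $(\nabla \vdash l \to r) \in \theory{R}$, a position $\context{C}$, a subterm $s_0'$ with $s_0 \equiv \context{C}[s_0']$, a permutation $\pi$, and a substitution $\theta$ with $(\Delta_1,\theta) \in \mathcal{U}_{\theory{E}}(\nabla \vdash l, \Delta_0 \vdash s_0')$, such that
\[
\Delta_1 \vdash \big(\nabla\theta,\ \Delta_0\theta,\ s_0'\theta \ealeq \pi\cdot(l\theta),\ (\context{C}[\pi\cdot r])\theta \aleq s_1\big).
\]
This already gives, via the reasoning of Proposition~\ref{prop:narrow-to-rewriting}, the rewriting step $\Delta_1 \vdash s_0\theta \to_{\theory{R,E}} s_1$.

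Next I would apply the substitution $\rho$. The goal is to show that applying $\rho$ throughout turns the judgement $\Delta_1 \vdash s_0\theta \to_{\theory{R,E}} s_1$ into $\Delta \vdash (s_0\theta)\rho \to_{\theory{R,E}} s_1\rho$, where $\Delta = \nfpair{\Delta_1\rho}$. The key step is that each of the three constraints in the rewriting judgement is preserved under $\rho$: by Definition~\ref{def:e-compatible}(1) the freshness constraints $\nabla\theta$ and $\Delta_0\theta$ pass to $\nfpair{\Delta_1\rho} \vdash (\nabla\theta)\rho$ and $\nfpair{\Delta_1\rho} \vdash (\Delta_0\theta)\rho$; by Definition~\ref{def:e-compatible}(2) the \theory{E}-equality $s_0'\theta \ealeq \pi\cdot(l\theta)$ is preserved as $(s_0'\theta)\rho \ealeq \pi\cdot(l\theta\rho)$ (commuting $\rho$ past $\pi$ using the permutation/substitution action defined in the preliminaries); and the final $\aleq$-constraint, being a special case of \theory{E}-equality, is likewise preserved. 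Since $\rho$ satisfies $\Delta_1$ with $\Delta$, the hypothesis $\nfpair{\Delta_1\rho}$ is consistent, so the three clauses of Definition~\ref{def:e-compatible} are indeed applicable. Assembling these, the instantiated data $(\nabla \vdash l \to r)$, position $\context{C}\rho$, subterm $(s_0'\theta)\rho$, permutation $\pi$, and substitution $\theta\rho$ witness $\Delta \vdash (s_0\theta)\rho \to_{\theory{R,E}} s_1\rho$ by Definition~\ref{def:rew-modC}.

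The main obstacle I expect is bookkeeping around how $\rho$ interacts with the position/context and the permutation: I must check that $(s_0\theta)\rho \equiv (\context{C}[s_0'\theta])\rho \equiv \context{C}\rho[(s_0'\theta)\rho]$, i.e.\ that applying a substitution commutes with plugging into a position, and that $\pi\cdot(l\theta)\rho = \pi\cdot(l(\theta\rho))$, which uses that substitution commutes with the permutation action on the rule's left-hand side. These are routine consequences of the definitions of permutation and substitution action given in the preliminaries, but they are precisely the points where consistency of $\nfpair{\Delta_1\rho}$ must be invoked so that Definition~\ref{def:e-compatible} applies. Finally I would note that choosing $\Delta = \nfpair{\Delta_1\rho}$ is exactly the minimal context making $\rho$ satisfy $\Delta_1$, which justifies the last sentence of the statement.
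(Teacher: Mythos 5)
Your proposal is correct and takes essentially the same route as the paper's proof: invoke Proposition~\ref{prop:narrow-to-rewriting} to obtain $\Delta_1 \vdash s_0\theta \to_{\theory{R,E}} s_1$, then use \theory{E}-compatibility (Definition~\ref{def:e-compatible}) to push $\rho$ through each constraint of that rewriting judgement, landing in the context $\nfpair{\Delta_1\rho}$. The only slip is notational: since $\theta$ also instantiates the context, the witnessing position is $\context{C}\theta\rho$ (the paper writes $\context{C}\theta\rho[(s_0'\theta)\rho]$), not $\context{C}\rho$, but this does not affect the argument.
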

\begin{proof}
From Proposition~\ref{prop:narrow-to-rewriting}:
$(\Delta_0\vdash s_0) \rightsquigarrow_{\theory{R,E}}^{\theta} (\Delta_1 \vdash s_1) $ implies $\Delta_1\vdash (s_0 \theta ) \rightarrow_{\theory{R,E}}  s_1 $.
By \theory{E}-compatiblity of derivability with substitutions $\Delta_1 \vdash s_0\theta \rightarrow_{\theory{R,E}} s_1$ gives: 

\begin{itemize}
    \item $(s_0\theta)\rho \equiv (\context{C}\theta[s'_0\theta])\rho = \context{C}\theta\rho[(s'_0\theta)\rho]$
    \item $\Delta_1 \vdash \nabla\theta$ implies  $\nfpair{\Delta_1\rho} \vdash \nabla\theta\rho$
    \item $\Delta_1 \vdash s_0'\theta \ealeq \pi\cdot (l\theta)$ implies $\nfpair{\Delta_1\rho} \vdash s_0'\theta\rho \ealeq (\pi\cdot(l\theta))\rho = \pi\cdot(l\theta\rho)$
    \item $\Delta_1 \vdash \context{C}\theta[\pi\cdot(r\theta)] \aleq s_1$ implies $\nfpair{\Delta_1\rho} \vdash \context{C}\theta\rho[\pi\cdot(r\theta\rho)] = (\context{C}\theta[\pi\cdot(r\theta)])\rho \aleq s_1\rho$
\end{itemize}
which implies that $\nfpair{\Delta_1\rho} \vdash (s_0\theta)\rho \rightarrow_{\theory{R,E}} s_1\rho$.
Note that we need $\rho$ satisfying $\Delta_1$ with $\Delta$ to guarantee that when we instantiate $\Delta_1$ we do not have any inconsistency with the freshness constraints in $\Delta_1$.
\end{proof}

The following result (correctness) states that a finite sequence of rewriting steps exists for each finite sequence of narrowing steps.

%We begin by presenting that a correctness result is preserved for finite sequences of narrowing steps as follows.

\begin{lemma}{($\rightsquigarrow_{\theory{R,E}}^*$ to $\to_{\theory{R,E}}^*$)}\label{theo:narrowtorewrite}
Let \theory{E} be compatible with $\vdash$ by substitutions and  $(\Delta_0\vdash s_0) \rightsquigarrow^*_{\theory{R,E}} (\Delta_n \vdash s_n) $ be a nominal \theory{E}-narrowing derivation. 
Let $\rho$ be a substitution satisfying $\Delta_n$ with  $\Delta$.
$$(\Delta_0\vdash s_0) \rightsquigarrow^{\theta_0}_{\theory{R,E}} (\Delta_1 \vdash s_1) \rightsquigarrow^{\theta_1}_{\theory{R,E}}  \ldots \rightsquigarrow^{\theta_{n-1}}_{\theory{R,E}} (\Delta_n \vdash s_n) $$
%such that $\Delta \vdash \Delta_0\rho$. 
Then, there exists a nominal \theory{E}-rewriting derivation 
% \vspace{-2mm}
$$\Delta\vdash s_0\rho_0  \rightarrow_{\theory{R,E}} \ldots \rightarrow_{\theory{R,E}} s_i\rho_i \rightarrow_{\theory{R,E}}\ldots   \rightarrow_{\theory{R,E}} s_{n-1}\rho_{n-1}\rightarrow_{\theory{R,E}} s_n\rho$$ 
such that $\Delta\vdash \Delta_i\rho_{i}$ and $\rho_{i}=\theta_{i}\ldots \theta_{n-1}\rho$, for all $0\leq i < n$. In other words,
% \vspace{-1mm}
$\Delta\vdash (s_0 \theta )\rho \rightarrow^*_{\theory{R,E}}  s_n \rho $
where $\theta=\theta_0\theta_1\ldots \theta_{n-1}$.
\end{lemma}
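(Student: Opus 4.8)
The plan is to prove Lemma~\ref{theo:narrowtorewrite} by induction on the length $n$ of the narrowing derivation, using Lemma~\ref{lem:correctness} as the single-step engine. The base case $n=0$ is immediate: the derivation is empty (or a reflexivity step $\Delta_0\vdash s_0\aleq s_0$), so the rewriting derivation is also empty and $\rho_0=\rho$ does the job. For the inductive step, I would split the given narrowing derivation of length $n$ as its first step $(\Delta_0\vdash s_0)\rightsquigarrow^{\theta_0}_{\theory{R,E}}(\Delta_1\vdash s_1)$ followed by a derivation of length $n-1$, namely $(\Delta_1\vdash s_1)\rightsquigarrow^*_{\theory{R,E}}(\Delta_n\vdash s_n)$, and apply the induction hypothesis to this shorter tail.

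The key technical point is to feed the correct substitution into each application. First I would apply the induction hypothesis to the tail, taking the \emph{same} $\rho$ (which satisfies $\Delta_n$ with $\Delta$) as the terminal substitution; this yields the rewriting derivation $\Delta\vdash s_1\rho_1\rightarrow^*_{\theory{R,E}}s_n\rho$ with $\rho_i=\theta_i\cdots\theta_{n-1}\rho$ and $\Delta\vdash\Delta_i\rho_i$ for all $1\leq i<n$. In particular this gives $\Delta\vdash\Delta_1\rho_1$, so $\rho_1$ satisfies $\Delta_1$ with $\Delta$. Now I can invoke Lemma~\ref{lem:correctness} on the first narrowing step $(\Delta_0\vdash s_0)\rightsquigarrow^{\theta_0}_{\theory{R,E}}(\Delta_1\vdash s_1)$ with the substitution $\rho_1$ in the role of ``$\rho$'', obtaining $\Delta\vdash(s_0\theta_0)\rho_1\rightarrow_{\theory{R,E}}s_1\rho_1$. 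Setting $\rho_0=\theta_0\rho_1=\theta_0\theta_1\cdots\theta_{n-1}\rho$, the left-hand side $(s_0\theta_0)\rho_1$ is exactly $s_0\rho_0$ by associativity of substitution composition, and prepending this step to the tail derivation gives the full derivation $\Delta\vdash s_0\rho_0\rightarrow_{\theory{R,E}}s_1\rho_1\rightarrow^*_{\theory{R,E}}s_n\rho$, with the required shape $\rho_i=\theta_i\cdots\theta_{n-1}\rho$ now holding for all $0\leq i<n$.

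The main obstacle I expect is bookkeeping the substitutions and the freshness contexts consistently, rather than any deep argument. Specifically, I must verify that $\rho_1$ really does satisfy $\Delta_1$ with $\Delta$ (so that Lemma~\ref{lem:correctness} is applicable with $\Delta=\nfpair{\Delta_1\rho_1}$), and that the chain of contexts $\Delta\vdash\Delta_i\rho_i$ produced by the induction hypothesis remains coherent when the single new step is prepended. The identity $(s_0\theta_0)\rho_1=s_0(\theta_0\rho_1)=s_0\rho_0$ relies only on the associativity of substitution application established in the preliminaries, and the telescoping $\rho_i=\theta_i\cdots\theta_{n-1}\rho$ follows directly from the definitions $\rho_0=\theta_0\rho_1$ and the inductive $\rho_1=\theta_1\cdots\theta_{n-1}\rho$. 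The final summarising statement $\Delta\vdash(s_0\theta)\rho\rightarrow^*_{\theory{R,E}}s_n\rho$ with $\theta=\theta_0\theta_1\cdots\theta_{n-1}$ is then just the special case obtained by reading $\rho_0=\theta\rho$ and noting $s_0\rho_0=s_0(\theta\rho)=(s_0\theta)\rho$.
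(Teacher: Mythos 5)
Your proof is correct and rests on the same two pillars as the paper's: induction on the length of the narrowing derivation, with Lemma~\ref{lem:correctness} supplying each rewriting step. The difference is the direction of the decomposition. The paper peels off the \emph{last} narrowing step: it applies the induction hypothesis to the prefix ending at $(\Delta_n\vdash s_n)$, which forces it to manufacture a terminal substitution $\rho=\theta_n\sigma$ for that prefix and to prove, via the narrowing condition $\Delta_{n+1}\vdash\Delta_n\theta_n$ together with $\theory{E}$-compatibility, that this $\rho$ satisfies $\Delta_n$ with $\Delta$ (its steps (H1)--(H3)). You instead peel off the \emph{first} step and apply the induction hypothesis to the tail with the \emph{same} $\rho$; this is slightly smoother, because the hypothesis needed to invoke Lemma~\ref{lem:correctness} on the first step --- that $\rho_1=\theta_1\cdots\theta_{n-1}\rho$ satisfies $\Delta_1$ with $\Delta$ --- is exactly the conclusion $\Delta\vdash\Delta_1\rho_1$ handed to you by the induction hypothesis, so no new substitution has to be constructed or verified. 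The price is symmetric: you still need $\theory{E}$-compatibility (plus the narrowing condition $\Delta_1\vdash\Delta_0\theta_0$ and cut for $\vdash$) to establish the one context condition not covered by the induction hypothesis, namely $\Delta\vdash\Delta_0\rho_0$; you flag this as bookkeeping but should spell it out, and it is the same argument as the paper's base case: from $\Delta_1\vdash\Delta_0\theta_0$, compatibility gives $\nfpair{\Delta_1\rho_1}\vdash\Delta_0\theta_0\rho_1$, and $\Delta\vdash\Delta_1\rho_1$ then yields $\Delta\vdash\Delta_0\rho_0=\Delta_0\theta_0\rho_1$. With that detail filled in, the two proofs are equivalent in substance; yours localises the use of compatibility to the context bookkeeping, while the paper's localises it to legitimising the application of the induction hypothesis.
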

% \begin{proof}
%     The proof is in Appendix~\ref{app:proofs} and illustrated by Figure~\ref{fig:narrrew}.
% \end{proof}
\begin{proof}
By induction on the length $n\geq 1$ of the narrowing derivation $(\Delta_0\vdash s_0) \rightsquigarrow^{n}_{\theory{R,E}} (\Delta_n \vdash s_n)$, 
using the one-step result proved in Lemma~\ref{lem:correctness}. (We start the induction for $n=1$ because the case for $n=0$ holds trivially and gives no additional insight.) 
\begin{itemize}
    \item \textbf{Base Case:} For $n=1$, 
    % we have by assumption that $\rho_0 = \theta_0\rho$ and $\Delta \vdash \Delta_0\rho_0$. From Lemma~\ref{lem:correctness}, there is a step 
    % $\Delta \vdash s_0\rho_0 = (s_0\theta_0)\rho \rightarrow_{\theory{R,C}} s_1\rho$, and the result follows.
    we have $(\Delta_0\vdash s_0) \rightsquigarrow_{\theory{R,E}} (\Delta_1 \vdash s_1)$ and by Lemma~\ref{lem:correctness}, for any $\rho$ satisfying $\Delta_1$ with $\Delta$ we have $\Delta \vdash  (s_0\theta_0)\rho \rightarrow_{\theory{R,E}} s_1\rho$. Since $\Delta\vdash \Delta_1\rho$, and by the narrowing step $\Delta_1\vdash \Delta_0\theta_0$, we get $\Delta\vdash \Delta_0\theta_0\rho$. Taking $\rho_0 = \theta_0\rho$, we have the result
    $\Delta\vdash s_0\rho_0 \to_\theory{R,E} s_1\rho$
    such that $\Delta\vdash \Delta_0\rho_0$.
    \item \textbf{Induction Step:} Assume that the result holds for $n>1$. Then 
    $(\Delta_0 \vdash s_0) \rightsquigarrow^{n}_{\theory{R,E}} (\Delta_n \vdash s_n)$
    implies that there exists a rewriting derivation
    $\Delta \vdash s_0\rho_0 \rightarrow^{n}_{\theory{R,E}} s_n\rho$, for some $\rho$ satisfying $\Delta_n$ with $\Delta$
    % \daniele{By unfolding the definition of $\rho_i$, the rewriting derivation is as  $\Delta \vdash s_0\rho_0 \equiv (s_0\theta_0)\rho\rightarrow^{i+1}_{\theory{R,C}} s_i \theta_i\ldots \theta_{n-1}\rho\equiv s_i\rho_i$,}
    and Figure~\ref{fig:narrrew} illustrates this setting.
    
    We want to show that the result follows for $n+1$. Consider the narrowing step 
    $$(\Delta_n \vdash s_n) \rightsquigarrow^{\theta_n}_\theory{R,E} (\Delta_{n+1} \vdash s_{n+1}).$$

   By Lemma~\ref{lem:correctness}, for any substitution, let's name it $\sigma$, that satisfies $\Delta_{n+1}$ with $\Delta$ {\bf (H1)}
    % there exists $\Delta$ that satisfies $\Delta_i\rho_{i+1}$, and we have 
    we have
    \begin{equation}\label{eq:narrow}
        \Delta \vdash (s_n\theta_n)\sigma \to_{\theory{R,E}} s_{n+1}\sigma
    \end{equation}
  
    %$\Delta \vdash s_0\rho_0 \rightarrow^{n+1}_{\theory{R,E}} s_{n+1}\sigma$, for some $\sigma$ that satisfies $\Delta_{n+1}$ with $\Delta$. That is, $$\Delta \vdash s_0\rho_0 \to^n_\theory{R,E} s_n\rho \to_{\theory{R,E}} s_{n+1}\sigma.$$

    Take $\rho = \theta_n\sigma$. Note that  $\rho$ satisfies $\Delta_n$ with $\Delta$:
    \begin{enumerate}[\bf (H1)]
        \item $\Delta\vdash \Delta_{n+1}\sigma$.
        \item By Definition~\ref{def:narr-C}: $\Delta_{n+1}\vdash \Delta_n\theta_n$.
        \item From {\bf (H2)} and \theory{E}-compatibility Definition~\ref{def:e-compatible}(1) generalised to \theory{E}:  $\langle\Delta_{n+1}\sigma\rangle_{nf} \vdash \Delta_n\rho$.
    \end{enumerate}
     Thus, from {\bf (H1)} and {\bf (H3)} it follows that  $\Delta\vdash \Delta_n\rho$.
    By the induction hypothesis,  we have 
    $$\Delta \vdash s_0\theta_0\ldots\theta_{n-1}\rho \rightarrow^{n}_{\theory{R,E}} s_n\rho$$
    with $\Delta\vdash \Delta_i\rho_i$ 
    and $\rho_i = \theta_i\ldots\theta_{n-1}\rho$, for every $i=1,\ldots, n$. Hence,
    $$\Delta \vdash s_0\theta_0\ldots\theta_{n-1}\theta_n\sigma \to^{n}_{\theory{R,E}} s_n\theta_n\sigma \overset{\eqref{eq:narrow}}{\to}_{\theory{R,E}} s_{n+1}\sigma,$$
    and the result follows.
%
    % Note that $\Delta\vdash \Delta_{n+1}\sigma$ and by the narrowing step $\Delta_{n+1}\vdash \Delta_n\theta_n$ thus $\Delta\vdash \Delta_n\theta_n\sigma = \Delta_n\rho$.
\end{itemize}
%
%Notice that $\rho_i$ is $\theory{R,E}$-normalised if $\rho_0$ is $\theory{R,E}$-normalised.
\end{proof}

% With Theorem~\ref{theo:narrowtorewrite} we have extended the converse part of the Nominal Lifting Theorem
% %(Theorem~\ref{theorem:lifting}), from the previous chapter,
% to the nominal rewriting/narrowing modulo $\theory{C}$.

% \begin{figure}
% \small
% $$
% \boxed{
% \textnormal{
%  \xymatrix{
% (\Delta_0 \vdash s_0) \ar@{~>}[r]^{\theta_0} & (\Delta_1 \vdash s_1) \ar@{~>}[r]^{*} & (\Delta_i \vdash s_i) \ar@{~>}[r]^{\theta_i} & (\Delta_{i+1} \vdash s_{i+1}) \ar@{~>}[r]^{*} & \ldots \ar@{~>}[r]^{\theta_n} &(\Delta_n \vdash s_n)\\
% \Delta  \vdash s_0\rho_0 \ar@{<--}[u]_{\rho_{0}}\ar[r] & s_1\rho_1\ar@{<--}[u]_{\rho_1}\ar[r]^{*} & s_i\rho_i\ar[r]\ar@{<--}[u]_{\rho_i} &  s_{i+1}\rho_{i+1}\ar@{<--}[u]_{\rho_{i+1}} \ar[r]^{*} & \ldots \ar[r] & s_n\rho_n \ar@{<--}[u]_{\rho_{n}}
% } 
% }
% }
% $$
% \caption{Corresponding Narrowing to Rewriting Derivations\label{fig:narrrew}}
%  \end{figure}

\begin{figure}[!t]
    \centering
    \includegraphics[scale=.55]{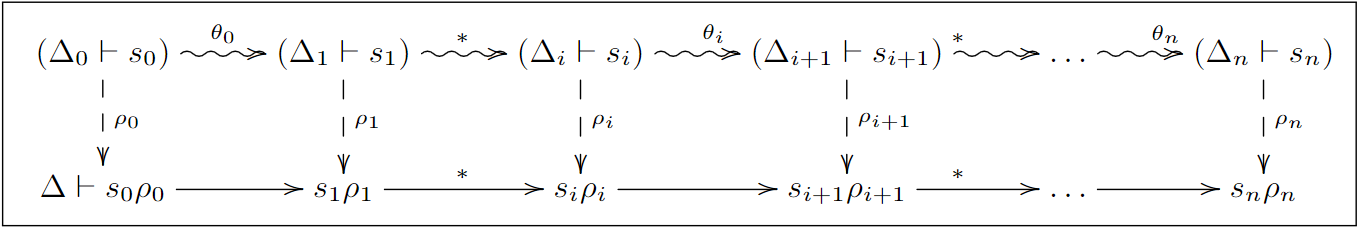}
    \caption{Corresponding Narrowing to Rewriting Derivations}
    \label{fig:narrrew}
\end{figure}

% \daniele{Add some text explaining the next result.}
The proof of the converse (completeness) is more challenging. Nevertheless, for one-step rewriting to one-step narrowing, the result holds with no further problems:

% \alert{Check this $R,E$-normal form and in the proof of this Lemma. Is it $R,E$ or $R/E$-nf? }\dani{we want to use a R,E-normalised subst because we are using R,E-steps. However, in the 4.5 we have coherence, so R,E- and R/E-nf are E-equal}
\begin{lemma}{($\to_{\theory{R,E}}$ to $\rightsquigarrow_{\theory{R,E}}$)}\label{alem:rewrite-narr}
 Let $\Delta_0\vdash s_0$ be a nominal term in context and $V_0$ a finite set of variables containing $V=V(\Delta_0,s_0)$.
Let   $\rho_0$ be a $\theory{R,E}$-normalised substitution, with $\dom{\rho_0}\subseteq V$, that satisfies $\Delta_0$ with  $\Delta$ and
$$\Delta \vdash s_0\rho_0=t_0 \to_{\theory{R,E}} t_1.$$
Then, there exists a nominal  \theory{R,E}-narrowing step
$$(\Delta_0 \vdash s_0) \rightsquigarrow^\theta_{\theory{R,E}} (\Delta_1 \vdash s_1),$$
for a substitution $\theta$, a finite set of variables $V_1 \supseteq V(s_0)$, and a $\theory{R,E}$-normalised substitution $\rho_1$ with $\Delta$ such that

\[(i)~\Delta \vdash s_1\rho_1 \ealeq t_1 \qquad  (ii)~\texttt{dom}(\rho_1) \subseteq V_1 \qquad (iii)~\Delta\vdash \rho_0|_V\ealeq \theta \rho_1|_V \]
\end{lemma}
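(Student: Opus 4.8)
The goal is a one-step lifting result: given a rewrite step on the instance $t_0 = s_0\rho_0$, reconstruct a narrowing step on $\Delta_0 \vdash s_0$ whose substitution $\theta$, when composed with a residual normalised substitution $\rho_1$, recovers $\rho_0$ on the relevant variables and matches the reduct $t_1$ up to $\ealeq$. The plan is to unpack the rewrite step $\Delta \vdash t_0 \to_{\theory{R,E}} t_1$ via Definition~\ref{def:rew-modC}, obtaining a rule $\nabla \vdash l \to r \in \theory{R}$, a position $\context{C}$, a permutation $\pi$, and a substitution $\eta$ such that $t_0 \equiv \context{C}[t_0']$ with $\Delta \vdash t_0' \ealeq \pi\cdot(l\eta)$ and $\Delta \vdash \context{C}[\pi\cdot(r\eta)] \aleq t_1$. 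The key idea is that, because $\rho_0$ is $\theory{R,E}$-normalised, the redex $t_0'$ cannot lie strictly inside the part of $t_0$ contributed by $\rho_0$ (a redex inside an instance of a variable would contradict normalisation of $\rho_0$). Hence the position $\context{C}$ can be traced back to a non-variable position in $s_0$: there is a subterm $s_0'$ of $s_0$ with $s_0 \equiv \context{D}[s_0']$ for some position $\context{D}$ in $s_0$ such that $\context{D}\rho_0 = \context{C}$ (up to $\ealeq$) and $s_0'\rho_0 \equiv t_0'$.

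First I would make this ``redex is not below a variable'' argument precise, since it is the crux of the whole lemma. With the redex located at an honest position $\context{D}$ of $s_0$, the matching equation $s_0'\rho_0 \ealeq \pi\cdot(l\eta)$ exhibits $\rho_0 \cup \eta$ (suitably combined, after renaming the rule variables apart from $V_0$ so that $\dom{\rho_0}$ and $\var{l}$ are disjoint) as an $\theory{E}$-unifier of $s_0'$ and $l$ under the permutation $\pi$. Concretely, the combined substitution solves the nominal $\theory{E}$-unification problem $(\nabla \vdash l) \;_?{\overset{\theory{E}}{\approx}}_?\; (\Delta_0 \vdash s_0')$. By the hypothesis that a complete $\theory{E}$-unification algorithm exists, there is a most general solution $(\Delta_1, \theta) \in {\cal U}_{\theory{E}}(\nabla \vdash l, \Delta_0 \vdash s_0')$ that is more general than this witness; completeness gives a substitution $\rho_1$ with $\Delta \vdash (\rho_0 \cup \eta)|_V \ealeq \theta\rho_1|_V$ on the variable set $V$, which is exactly requirement $(iii)$. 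This $(\Delta_1, \theta)$ fires the narrowing step $(\Delta_0 \vdash s_0) \rightsquigarrow^{\theta}_{\theory{R,E}} (\Delta_1 \vdash s_1)$ with $s_1$ the instance $(\context{D}[\pi\cdot r])\theta$, taking $V_1 \supseteq V(s_0)$ to include the fresh rule variables.

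Then I would verify $(i)$: that $s_1\rho_1 \ealeq t_1$. This is a computation chaining $(iii)$ with $\theory{E}$-compatibility of $\vdash$ by substitutions (Definition~\ref{def:e-compatible}), pushing $\rho_1$ through $s_1 = (\context{D}[\pi\cdot r])\theta$ to recover $\context{C}[\pi\cdot(r\eta)]$ up to $\ealeq$, and then using $\Delta \vdash \context{C}[\pi\cdot(r\eta)] \aleq t_1$ from the original rewrite step together with transitivity. Requirement $(ii)$, $\dom{\rho_1} \subseteq V_1$, follows by construction since $\rho_1$ is the residual of the complete-unifier factorisation restricted to the variables introduced by $\theta$ and those of $s_0$. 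I would also confirm $\rho_1$ can be taken $\theory{R,E}$-normalised: normalising each $X\rho_1$ preserves $(i)$ and $(iii)$ up to $\ealeq$ since $\ealeq$ is a congruence, and does not affect membership in the solution set.

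\textbf{Main obstacle.} The delicate point is the localisation step, namely proving rigorously that the redex position $\context{C}$ in $t_0 = s_0\rho_0$ descends from a genuine (non-variable-abstracted) position in $s_0$. In the nominal setting this requires care with suspended permutations and abstractions: one must argue that if the redex occurred inside $X\rho_0$ for some $X$, then $X\rho_0$ would itself be $\theory{R,E}$-reducible (after applying the ambient permutation carried down to that position), contradicting that $\rho_0$ is normalised. Making this interaction between permutation actions, the context $\context{C}$, and the freshness side-conditions $\nabla\eta$ precise—and checking that the permutation $\pi$ and the unifier combine consistently across the atoms abstracted along the path to the redex—is where the real work lies; the remaining equational bookkeeping in $(i)$–$(iii)$ is routine given Definition~\ref{def:e-compatible} and completeness of the $\theory{E}$-unification algorithm.
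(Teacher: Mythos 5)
Your overall skeleton (unpack the rewrite step, use normalisation of $\rho_0$ to locate the redex at a non-variable position of $s_0$, read off an $\theory{E}$-unification problem between $s_0'$ and the renamed left-hand side) matches the paper's proof up to that point. But from there you diverge: you take $\theta$ to be a most general solution from the complete set of $\theory{E}$-unifiers and let $\rho_1$ be the residual substitution supplied by completeness, whereas the paper simply takes $\theta = \rho_0\sigma$ itself, the composition of the instantiating substitution with the rule matcher. This is legitimate because Definition~\ref{def:narr-C} admits \emph{any} member of ${\cal U}_{\theory{E}}(\nabla\vdash l, \Delta_0\vdash s_0')$ as the narrowing substitution, not only most general ones; the paper then sets $\Delta_1 = \Delta$, $\rho_1 = {\tt Id}$ and $V_1 = V(s_0)$, after which (i)--(iii) are nearly immediate: ${\tt Id}$ is trivially $\theory{R,E}$-normalised, and $X\theta\rho_1 \equiv X\rho_0\sigma \equiv X\rho_0$ for $X\in V$ because the rule variables are renamed apart from $V(\Delta,t_0)$.

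The genuine gap in your route is the normalisation of $\rho_1$. The lemma requires $\rho_1$ to be $\theory{R,E}$-normalised (this is exactly what feeds the induction in Lemma~\ref{theo:rewritenarr}), and the residual substitution produced by the factorisation $(\Delta_1,\theta)\leq_{\theory{E}}(\Delta,\rho_0\cup\eta)$ need not be normalised; nothing in the hypotheses forces it to be. Your proposed repair --- normalise each $X\rho_1$ and claim (i) and (iii) survive ``up to $\ealeq$ since $\ealeq$ is a congruence'' --- does not work: replacing $X\rho_1$ by its $\theory{R,E}$-normal form moves along $\to_{\theory{R,E}}$-steps, and a term is in general \emph{not} $\ealeq$-equal to its $\theory{R,E}$-normal form, since $\ealeq$ is equality modulo $\theory{E}$ only, not modulo $\theory{R}\cup\theory{E}$. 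So after normalising, both $\Delta\vdash\rho_0|_V\ealeq\theta\rho_1|_V$ and $\Delta\vdash s_1\rho_1\ealeq t_1$ may fail. One could instead try to argue that $\rho_1$ is already normalised because $X\theta\rho_1\ealeq X\rho_0$ and $X\rho_0$ is a normal form, but transferring irreducibility across $\ealeq$ is precisely $\theory{E}$-coherence of $\to_{\theory{R,E}}$, which is not among the hypotheses of this one-step lemma (it is only assumed later, in Lemma~\ref{theo:rewritenarr}). The paper's choice of the non-most-general unifier $\rho_0\sigma$ sidesteps all of this: it trades the generality of $\theta$ (which the statement never asks for) for a residual that is normalised by construction.
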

\begin{proof}
Suppose that the one-step rewriting is done in a position $\context{C}_0$ of $t_0$, with substitution $\sigma$ and  rule $R_0 = \nabla_0 \vdash l_0 \to r_0 \in \theory{R}$:
%$\Delta \vdash t_0 \to_{[\context{C}_0, R_0],\theory{E}} t_1$. Then, 
\begin{mathpar}
    \inferrule*[left=(*)]{t_0\equiv \context{C}_0[t_0'] \\ \Delta \vdash \nabla_0\sigma,\; t_0' \ealeq \pi\cdot(l_0\sigma)  ,\; \context{C}_0[\pi\cdot(r_0\sigma)] \aleq t_1}{\Delta \vdash t_0 \to_{[\context{C}_0, R_0],\theory{E}} t_1}
\end{mathpar}
%That means that for some position $\context{C}_0[\_]$ we have $t_0\equiv \context{C}_0[t_0']$ and $\Delta \vdash \nabla_0\theta,\; \pi\cdot(l_0\theta) \ealeq t_0',\; \context{C}_0[\pi\cdot(r_0\theta)] \ealeq t_1$~(*).
%

The following hold:
%\dani{Check this proof and cite the compatibility of substitutions}
\begin{enumerate}[{\bf (H1)}]
    \item The variables of $R_0$ are renamed with respect to $t_0 = s_0\rho_0$ and $\Delta$ (to avoid conflicts). Thus,  $V(R_0)\cap V(\Delta, t_0) = \emptyset$ and $\texttt{dom}(\sigma) \cap V_0 = \emptyset$.  
    \item By hypothesis, $\Delta\vdash \Delta_0\rho_0$ 
    \item Since $\rho_0$ is  normalised in $\Delta$ and $\Delta\vdash s_0\rho_0\to_\theory{R,E} t_1$,  there must exist a non-variable position $\context{C}_0'$ and a subterm $s_0'$ of $s_0$ such that $s_0\equiv \context{C}_0'[s_0']$ and $\Delta \vdash s_0'\rho_0 \ealeq t_0' \ealeq \pi\cdot(l_0\sigma)$.
    %\daniele{use a $\sigma$}
\end{enumerate}
%since $V(R_0)\cap V(\Delta, t_0) = \emptyset$, because $R_0$ is a variable renamed rule with respect to $t_0 = s_0\rho_0$.

%It follows from the fact that $\rho_0$ is normalised in $\Delta$, that $\Delta \vdash s_0\rho_0 \ealeq t_0$ and $\Delta \vdash \Delta_0\rho_0$ that there exists a non-variable position $\context{C}_0'$ such that  

Define {\bf (H4)} $\theta=\rho_0\sigma$. 
%$\eta = \rho_0\cup\theta$~(**). 
Then, we have the following:
\begin{enumerate}[{\bf (H1)}]
\setcounter{enumi}{4}
    \item $\Delta \vdash \Delta_0\theta$: from {\bf (H2)} it follows that $\Delta \vdash \Delta_0\rho_0$ and $\sigma$ does not affect $\Delta_0$ since $\texttt{dom}(\sigma) = V(R_0)$.
    %\item $\Delta \vdash s_0'\eta \ealeq \pi\cdot (l_0\eta)$.
    \item Note that $s_0'\theta=s_0'\rho_0\sigma=s_0'\rho_0$ from {\bf (H1)}. Therefore, $\Delta \vdash s_0'\theta \ealeq \pi\cdot (l_0\theta)$ and $\Delta \vdash\nabla_0\theta$,
    %$\Delta \vdash s_0'\eta \ealeq \pi\cdot (l_0\eta)$ 
    and  $(\Delta,\theta)$
    %$(\Delta, \eta)$
    is a solution for the nominal \theory{E}-unification problem $(\Delta_0 \vdash s_0') \; _?\overset{\theory{E}}{{\approx}}_? \; (\nabla_0 \vdash \pi\cdot l_0)$. That is, $(\Delta, \theta)\in \mathcal{U}_\theory{E}(\Delta_0 \vdash s_0', \nabla_0 \vdash \pi\cdot l_0)$.
\end{enumerate}

% Take one solution  Then, the following hold:
% \begin{itemize} 
% \item $\Delta_1 \vdash \Delta_0\theta, \nabla_0\theta$
% \item $\Delta_1\vdash s_0'\theta \ealeq \pi\cdot(l_0\theta)$. 
% \end{itemize} 
Define $s_1$ as $s_1=\context{C}_0'[\pi\cdot r_0]\theta$ and $\Delta_1=\Delta$. Conditions {\bf (H4)} to {\bf (H6)} imply the existence of the  following nominal  \theory{E}-narrowing step:
 $$(\Delta_0 \vdash s_0) \rightsquigarrow^\theta_{\theory{R,E}} (\Delta_1 \vdash s_1).$$

Also $\Delta\vdash s_1=\context{C}_0'[\pi\cdot r_0]\theta=(\context{C}_0'[\pi\cdot r_0])\rho_0\sigma \aleq (\context{C}_0'\rho_0)[\pi\cdot r_0]\rho_0 \overset{(*)}{\approx}_{\alpha}  t_1$. Take $\rho_1={\tt Id}$ as the identity substitution, and items (i) and (ii) follow with $\dom{\rho_1}=\emptyset$ and $V_1=V(s_0)$.
Finally, it is trivial to check (iii):   $\Delta\vdash X\rho_0\ealeq  X\theta\rho_1\equiv X\rho_0\sigma{\tt Id}$, for all $X\in V$, and the result follows.

\end{proof}

%return \dani{This sentence needs to be removed} The challenge is that, in the first-order case, the approach to prove that each finite sequence of $\to_{\theory{R,E}}$ steps corresponds to a finite sequence of $\rightsquigarrow_{\theory{R,E}}$ steps relied on an additional property: \theory{E}-coherence. This property can be extended to the nominal framework as follows:

% \begin{tcolorbox}
%\begin{remark}\label{rmk:e-completeness}
% Nevertheless, the generalisation of one-step nominal rewrite
% % the previous lemma 
% to finite sequences of nominal rewrite steps is more problematic. 
% It relies on a property called \theory{E}-coherence, which we define in the nominal framework as follows:

%, see Figure~\ref{fig:prop3JOU} at Appendix.
% below.
%\end{remark}
% More details about the relations between $\to_{\theory{R}/\theory{E}}$ and $\to_{\theory{R},\theory{E}}$ and the property of \(\theory{E}\)-coherence can be found in Appendix~\ref{appendixFirstOrder}.
% \end{tcolorbox}

\smallskip
\smallskip

\begin{lemma}{($\to^*_{\theory{R,E}}$ to $\rightsquigarrow^*_{\theory{R,E}}$)}\label{theo:rewritenarr}
Let $\theory{R}{\cup}\theory{E}$ be an ENRS such that $\theory{R}$ is 
% $\theory{E}$-convergent 
\theory{E}-terminating, \theory{R,E} is \theory{E}-confluent
and $\to_{\theory{R,E}}$ is $\theory{E}$-coherent. Let $V_0$ be a finite set of variables containing $V=V(\Delta_0,s_0)$. Then, for any $\theory{R,E}$-derivation 
%---
$$\Delta \vdash t_0 = s_0\rho_0\rightarrow_{\theory{R,E}} t_1\rightarrow_{\theory{R,E}}\ldots \rightarrow_{\theory{R,E}} t_n=t_0{\downarrow}$$
%-----
to any of its $\theory{R,E}$-normal forms, say $t_0{\downarrow}$, where $\texttt{dom}(\rho_0) \subseteq V(s_0) \subseteq V_0$ and $\rho_0$ is a $\theory{R,E}$-normalised substitution that satisfies $\Delta_0$ with $\Delta$, there exist a $\theory{R,E}$-narrowing derivation 
%------
$$(\Delta_0\vdash s_0) \rightsquigarrow^{\theta_0}_{\theory{R,E}} (\Delta_1 \vdash s_1) \rightsquigarrow^{\theta_1}_{\theory{R,E}}  \ldots \rightsquigarrow^{\theta_{n-1}}_{\theory{R,E}} (\Delta_n \vdash s_n) $$
%---
%$$(\Delta_0 \vdash s_0) \rightsquigarrow_{\theory{R,E}}^* (\Delta_n \vdash s_n),$$for each $i$, $0\leq i < n$,   substitutions $\rho$ and  $\rho_i$ normalised in $\Delta$ w.r.t $\to_\theory{R,E}$  such that 
\begin{center}
(1) $\Delta\vdash \Delta_i\rho_i$; \qquad 
(2)  $\Delta \vdash s_i\rho_i \ealeq t_i$;\qquad 
(3)  $\Delta \vdash \rho_0|_V \ealeq \theta\rho_n|_V$.
\end{center}
where  $\rho_{i}=\theta_i\ldots \theta_{n-1}\rho$ and  $\theta=\theta_0\theta_1\ldots \theta_{n-1}$.
\end{lemma}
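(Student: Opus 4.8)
The plan is to argue by induction on the given $\theory{R,E}$-derivation $\Delta\vdash t_0=s_0\rho_0\to_{\theory{R,E}}t_1\to_{\theory{R,E}}\cdots\to_{\theory{R,E}}t_n=t_0{\downarrow}$ to its normal form, peeling off the first rewrite step and lifting it with the one-step result, Lemma~\ref{alem:rewrite-narr}. For a derivation of length $1$ the claim is exactly Lemma~\ref{alem:rewrite-narr}: it produces a narrowing step $(\Delta_0\vdash s_0)\rightsquigarrow^{\theta_0}_{\theory{R,E}}(\Delta_1\vdash s_1)$ together with an $\theory{R,E}$-normalised substitution $\rho_1$ satisfying $\Delta_1$ with $\Delta$ and fulfilling its conclusions $(i)$--$(iii)$, which are precisely the instances of (1)--(3). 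For the inductive step I would first apply Lemma~\ref{alem:rewrite-narr} to the initial step $\Delta\vdash t_0=s_0\rho_0\to_{\theory{R,E}}t_1$, obtaining a narrowing step $(\Delta_0\vdash s_0)\rightsquigarrow^{\theta_0}_{\theory{R,E}}(\Delta_1\vdash s_1)$ and a normalised $\rho_1$ with $\Delta\vdash s_1\rho_1\ealeq t_1$, $\dom{\rho_1}\subseteq V_1$ and $\Delta\vdash\rho_0|_V\ealeq\theta_0\rho_1|_V$.

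The difficulty is that the tail derivation $t_1\to^*_{\theory{R,E}}t_n$ starts from $t_1$, whereas the induction hypothesis must be fed a derivation starting from $s_1\rho_1$, and we only know $\Delta\vdash s_1\rho_1\ealeq t_1$ rather than $s_1\rho_1\equiv t_1$. This is exactly where nominal $\theory{E}$-coherence (Definition~\ref{def:e-coherence}) enters: from $\Delta\vdash s_1\rho_1\ealeq t_1$ and $\Delta\vdash t_1\to_{\theory{R,E}}t_2$ it lets me rewrite $s_1\rho_1$ and make the two branches converge modulo $\theory{E}$. Iterating coherence along the tail, and using $\theory{E}$-confluence together with $\theory{E}$-termination so that, by Theorem~\ref{prop:nom-e-coherence}, the $\theory{R,E}$-normal form is unique up to $\ealeq$, I obtain an $\theory{R,E}$-derivation $\Delta\vdash s_1\rho_1\to^*_{\theory{R,E}}(s_1\rho_1){\downarrow}$ whose normal form satisfies $\Delta\vdash(s_1\rho_1){\downarrow}\ealeq t_n$. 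Since $\rho_1$ is normalised and satisfies $\Delta_1$ with $\Delta$, the induction hypothesis applies to this derivation and yields a narrowing derivation $(\Delta_1\vdash s_1)\rightsquigarrow^*_{\theory{R,E}}(\Delta_n\vdash s_n)$ realising (1)--(3) relative to $s_1\rho_1$; prepending the first narrowing step produces the desired derivation out of $(\Delta_0\vdash s_0)$.

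It then remains to assemble the bookkeeping. The substitution identities $\rho_i=\theta_i\cdots\theta_{n-1}\rho$ and $\theta=\theta_0\cdots\theta_{n-1}$, together with (3), follow by composing the one-step conclusion $\Delta\vdash\rho_0|_V\ealeq\theta_0\rho_1|_V$ with the composite relation supplied by the induction hypothesis, exactly as in the proof of Lemma~\ref{theo:narrowtorewrite}. The freshness conditions (1), $\Delta\vdash\Delta_i\rho_i$, are propagated using $\theory{E}$-compatibility of $\vdash$ with substitutions (Definition~\ref{def:e-compatible}) and the constraint $\Delta_{i+1}\vdash\Delta_i\theta_i$ coming from each narrowing step, reproducing the $(H1)$--$(H3)$ argument of Lemma~\ref{theo:narrowtorewrite}; and (2), $\Delta\vdash s_i\rho_i\ealeq t_i$, is the invariant maintained by Lemma~\ref{alem:rewrite-narr} at each lifted step.

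I expect the genuine obstacle to be the coherence-based transfer of the second paragraph: reconciling $\Delta\vdash s_1\rho_1\ealeq t_1$ (not syntactic equality) with the need to continue the rewrite sequence, and keeping the induction well-founded. Because the coherence diagram may replace the tail by a derivation from $s_1\rho_1$ of a different length, an induction on the raw length $n$ does not immediately close; the clean remedy is to run the induction along the well-founded order induced by $\theory{E}$-termination of $\to_{\theory{R,E}}$, in which $s_1\rho_1$ lies strictly below $t_0$ since $\Delta\vdash t_0\to_{\theory{R,E}}t_1\ealeq s_1\rho_1$, and to invoke uniqueness of normal forms up to $\ealeq$ to guarantee that the transferred derivation still terminates at a normal form $\ealeq t_n$. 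Checking that each coherence step can be realised so that the corresponding narrowing step stays well defined, i.e.\ that its underlying nominal $\theory{E}$-unification problem is solvable, is the technically delicate point.
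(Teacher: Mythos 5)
Your proposal follows essentially the same route as the paper's proof: induction that peels off the first rewrite step with Lemma~\ref{alem:rewrite-narr}, applies the induction hypothesis to the tail, and invokes \theory{E}-coherence, \theory{E}-confluence and \theory{E}-termination via Theorem~\ref{prop:nom-e-coherence} to identify the resulting normal forms up to $\ealeq$. If anything, your handling of the delicate point is more careful than the paper's: where the paper applies the induction hypothesis directly to the derivation starting at $t_1$ (even though the hypothesis formally requires a derivation starting at $s_1\rho_1$, which is only $\ealeq$-equal to $t_1$), you make the coherence-based transfer explicit and replace induction on the raw length $n$ by well-founded induction on the $\to_{\theory{R/E}}$ termination order, which is exactly what is needed to close the argument rigorously.
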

% \begin{proof}
%     The draft of the proof is in Appendix~\ref{app:proofs}.
% \end{proof}

\begin{proof}
 By induction on the number of steps $n$ applied in the derivation $\Delta \vdash t_0 = s_0\rho_0\rightarrow_{\theory{R,E}}^* t_0{\downarrow}$.

\begin{itemize}
    \item \textbf{Base Case:} For $n=1$ the result follows directly from Lemma~\ref{alem:rewrite-narr}.

    \item \textbf{Induction Step:} Let $n>1$ and assume that the result holds for sequences of  $n-1$ rewriting steps.
    Then, 
    $$\Delta \vdash t_0 = s_0\rho_0\rightarrow_{\theory{R,E}} t_1\overbrace{\rightarrow_{\theory{R,E}}\ldots \rightarrow_{\theory{R,E}}}^{n-1} t_n=t_0{\downarrow}$$
    Now using Lemma~\ref{alem:rewrite-narr} on the rewrite step $\Delta \vdash t_0 \to_{\theory{R,E}} t_1$.  Then, we get that 
    $(\Delta_{0} \vdash s_{0}) \rightsquigarrow_{\theory{R,E}}^{\theta_{0}} (\Delta_1 \vdash s_1),$
   where $\rho_0$ is a $\theory{R,E}$-normalised substitution that satisfies $\Delta_0$ with $\Delta$, and
   \begin{enumerate}[{\bf (H1)}] 
   \item  $\Delta \vdash s_1\rho_1 = t_1' \ealeq t_1$; and 
   \item $\Delta \vdash \rho_0|_V \ealeq \theta\rho_1|_V$. 
   \end{enumerate} 
   Now consider the sequence to any of the normal forms of $t_1$:
      $$\Delta \vdash  t_1\overbrace{\rightarrow_{\theory{R,E}}\ldots \rightarrow_{\theory{R,E}}}^{n-1} t_n=t_1{\downarrow_\theory{R,E}}
      $$     
    By the induction hypothesis, there exists  a narrowing sequence 

     %----
    $$ (\Delta_1 \vdash s_1) \rightsquigarrow^{\theta_1}_{\theory{R,E}}  \ldots \rightsquigarrow^{\theta_{n-1}}_{\theory{R,E}} (\Delta_n \vdash s_n) $$
    %%--
    with $\theta=\theta_1\ldots \theta_{n-1}$, a normalised substitution $\rho_n$ such that 
   \begin{enumerate}[{\bf (H1)}] 
   \setcounter{enumi}{2}
   \item $\Delta\vdash \Delta_i\rho_i$;
   \item  $\Delta \vdash s_i\rho_i =t_i'\ealeq t_i$, for every $i$; 
   \item $\Delta \vdash \rho_0|_V \ealeq \theta\rho_n|_V$. 
   \end{enumerate} 
   Note that from {\bf (H4)}, $\Delta\vdash s_n\rho_n=t_n'\ealeq t_n$. Since \theory{R} is \theory{E}-convergent and $\to_{\theory{R,E}}$ is \theory{E}-coherent, it follows from Theorem~\ref{prop:nom-e-coherence}, that all the normal forms  of $t_0$ are $\ealeq$-equivalent. That is, 
   $\Delta\vdash t_n'\ealeq t_1\downarrow_\theory{R,E}\ealeq t_0\downarrow_\theory{R,E}=t_n.$
    Therefore, there exists  a nominal \theory{E}-narrowing sequence
  $$(\Delta_0\vdash s_0) \rightsquigarrow^{\theta_0}_{\theory{R,E}} (\Delta_1 \vdash s_1) \rightsquigarrow^{\theta_1}_{\theory{R,E}}  \ldots \rightsquigarrow^{\theta_{n-1}}_{\theory{R,E}} (\Delta_n \vdash s_n).$$
\end{itemize}
\end{proof}
As a consequence of Lemmas~\ref{theo:narrowtorewrite} and~\ref{theo:rewritenarr} we obtain:
%\alert{Shouldn't we need to add assumptions in this theorem??? Lemma 4.5 requires several assumptions on R and E}
\begin{theorem}[$\theory{E}$-Lifting Theorem]\label{teo:Clifitng} 
Let $\theory{R}{\cup}\theory{E}$ be an ENRS such that \theory{E} is compatible with $\vdash$ by substitutions,  $\theory{R}$ is  \theory{E}-terminating, \theory{R,E} is \theory{E}-confluent and $\to_{\theory{R,E}}$ is $\theory{E}$-coherent.
To each finite sequence of nominal \theory{E}-rewriting steps corresponds a  finite sequence of nominal $\theory{E}$-narrowing steps, and vice versa.
\end{theorem}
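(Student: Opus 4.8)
The plan is to assemble the $\theory{E}$-Lifting Theorem directly from the two directional correspondences already established, treating each direction of the biconditional separately. For the forward direction --- from nominal $\theory{E}$-rewriting sequences to nominal $\theory{E}$-narrowing sequences --- I would invoke Lemma~\ref{theo:rewritenarr}. That lemma already requires precisely the hypotheses listed in the theorem statement ($\theory{R}$ being $\theory{E}$-terminating, $\theory{R,E}$ being $\theory{E}$-confluent, and $\to_{\theory{R,E}}$ being $\theory{E}$-coherent), and it produces, for any rewriting derivation from $s_0\rho_0$ to a normal form, a corresponding narrowing derivation together with the coherence conditions (1)--(3). For the converse direction --- from nominal $\theory{E}$-narrowing sequences to nominal $\theory{E}$-rewriting sequences --- I would invoke Lemma~\ref{theo:narrowtorewrite}, which requires only that $\theory{E}$ be compatible with $\vdash$ by substitutions, and which turns any finite narrowing derivation into a rewriting derivation by composing the narrowing substitutions and applying a satisfying substitution $\rho$.

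The key steps, in order, are therefore: first state that the hypotheses of the theorem are exactly the conjunction of the hypotheses required by Lemmas~\ref{theo:narrowtorewrite} and~\ref{theo:rewritenarr}; second, obtain the ``narrowing $\Rightarrow$ rewriting'' inclusion directly from Lemma~\ref{theo:narrowtorewrite}; third, obtain the ``rewriting $\Rightarrow$ narrowing'' inclusion from Lemma~\ref{theo:rewritenarr}. Since the theorem is phrased as a ``vice versa'' biconditional correspondence rather than an equality of relations, the proof is essentially a packaging step: each lemma supplies one implication, and together they establish the bidirectional correspondence.

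The main subtlety I expect concerns the side conditions that the two lemmas carry but which the theorem statement suppresses. Lemma~\ref{theo:rewritenarr} produces narrowing derivations only relative to a $\theory{R,E}$-normalised substitution $\rho_0$ satisfying $\Delta_0$ with $\Delta$, and only along rewriting sequences that reach a normal form; moreover its correctness relies on Theorem~\ref{prop:nom-e-coherence} to guarantee that all normal forms are $\ealeq$-equivalent. Dually, Lemma~\ref{theo:narrowtorewrite} yields the rewriting derivation only after instantiating by a substitution $\rho$ satisfying $\Delta_n$ with $\Delta$. So the honest statement of the correspondence is a lifting/projection between derivations that carries these substitution- and context-tracking data, and the only real work in the proof of Theorem~\ref{teo:Clifitng} is to observe that the hypotheses align and to cite the two lemmas. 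I would therefore keep the proof short:

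\begin{proof}
The hypotheses of the theorem combine exactly the premises of Lemmas~\ref{theo:narrowtorewrite} and~\ref{theo:rewritenarr}. Given any finite sequence of nominal $\theory{E}$-narrowing steps, Lemma~\ref{theo:narrowtorewrite} (which needs only that $\theory{E}$ be compatible with $\vdash$ by substitutions) produces, for any substitution $\rho$ satisfying the final freshness context, a corresponding finite sequence of nominal $\theory{E}$-rewriting steps $\Delta\vdash (s_0\theta)\rho \to^*_{\theory{R,E}} s_n\rho$ with $\theta=\theta_0\theta_1\ldots\theta_{n-1}$.

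Conversely, given any finite sequence of nominal $\theory{E}$-rewriting steps $\Delta\vdash t_0=s_0\rho_0 \to^*_{\theory{R,E}} t_0{\downarrow}$ to an $\theory{R,E}$-normal form, with $\rho_0$ a $\theory{R,E}$-normalised substitution satisfying $\Delta_0$ with $\Delta$, Lemma~\ref{theo:rewritenarr} (whose hypotheses are that $\theory{R}$ is $\theory{E}$-terminating, $\theory{R,E}$ is $\theory{E}$-confluent and $\to_{\theory{R,E}}$ is $\theory{E}$-coherent, so that Theorem~\ref{prop:nom-e-coherence} applies and all normal forms are $\ealeq$-equivalent) yields a corresponding finite sequence of nominal $\theory{E}$-narrowing steps satisfying conditions (1)--(3). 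The two directions together establish the claimed correspondence.
\end{proof}
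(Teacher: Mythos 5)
Your proposal is correct and takes essentially the same approach as the paper: the paper proves Theorem~\ref{teo:Clifitng} precisely as an immediate consequence of Lemmas~\ref{theo:narrowtorewrite} and~\ref{theo:rewritenarr}, one for each direction of the correspondence, which is exactly the packaging step you describe. Your observations about the suppressed side conditions (normalised substitutions, derivations to normal forms, satisfying freshness contexts, and the reliance of Lemma~\ref{theo:rewritenarr} on Theorem~\ref{prop:nom-e-coherence}) are accurate and consistent with how the lemmas are stated.
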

Since there exists an algorithm for nominal \theory{C}-unification and \theory{C} is compatible with substitutions (Proposition~\ref{lem:nf-subst}), we have the following result.
\begin{corollary}
    The \theory{C}-Nominal Lifting theorem holds.
\end{corollary}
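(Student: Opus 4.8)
The plan is to derive the corollary as the special case $\theory{E}=\theory{C}$ of the general $\theory{E}$-Lifting Theorem~\ref{teo:Clifitng}. I would begin by checking that commutativity satisfies every standing hypothesis of that theorem when the equational theory is instantiated to $\theory{C}=\{\vdash f^\theory{C}(X,Y)\approx f^\theory{C}(Y,X)\}$. The compatibility of $\vdash$ with substitutions — the only hypothesis intrinsic to the equational theory itself, as opposed to the rewrite system $\theory{R}$ — is exactly Proposition~\ref{lem:nf-subst}, so this requirement is discharged immediately. The remaining three hypotheses ($\theory{R}$ being $\theory{C}$-terminating, $\theory{R,C}$ being $\theory{C}$-confluent, and $\to_{\theory{R,C}}$ being $\theory{C}$-coherent) are conditions on the particular ENRS under consideration; they transfer verbatim from the statement of Theorem~\ref{teo:Clifitng} to its $\theory{C}$-instance, so the corollary does not claim them unconditionally but inherits them as assumptions.

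Second, I would verify that the narrowing relation $\rightsquigarrow_{\theory{R,C}}$ is well-defined in this setting. The narrowing step of Definition~\ref{def:narr-C} requires computing solutions in $\mathcal{U}_{\theory{C}}(\nabla\vdash l,\Delta\vdash s')$, and these are supplied by the rule-based nominal $\theory{C}$-unification algorithm of Figure~\ref{fig:freshness-and-equal-simp}, which is complete. This is precisely the point at which $\theory{C}$ is singled out: it is the only equational theory for which such an algorithm is presently available, so both Lemma~\ref{lem:correctness} (and hence Lemma~\ref{theo:narrowtorewrite}) and Lemma~\ref{alem:rewrite-narr} (and hence Lemma~\ref{theo:rewritenarr}) become applicable.

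A point worth checking explicitly is that $\theory{C}$ qualifies as a \emph{first-order} equational theory in the sense of the definition preceding Theorem~\ref{prop:nom-e-coherence}: the commutativity axiom $\vdash f^\theory{C}(X,Y)\approx f^\theory{C}(Y,X)$ mentions only variables carrying the identity permutation and contains no atoms, abstractions, or suspended permutations. This is the mildly delicate step, and the one I would not skip, because Lemma~\ref{theo:rewritenarr} invokes Theorem~\ref{prop:nom-e-coherence} to conclude that all $\theory{R,C}$-normal forms of a term are $\approx_{\alpha,\theory{C}}$-equal, and that theorem is stated only for first-order theories. With $\theory{C}$ confirmed to be first-order, every hypothesis of Theorem~\ref{teo:Clifitng} is met.

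Finally, applying Theorem~\ref{teo:Clifitng} with $\theory{E}=\theory{C}$ yields both directions of the correspondence — the narrowing-to-rewriting direction from Lemma~\ref{theo:narrowtorewrite} and the rewriting-to-narrowing direction from Lemma~\ref{theo:rewritenarr} — which is exactly the assertion that the $\theory{C}$-Nominal Lifting theorem holds. I do not anticipate a genuine obstacle: the mathematical content resides entirely in the general theorem, and the work reduces to confirming that commutativity meets the theorem's preconditions, with the verification that $\theory{C}$ is first-order being the only step that requires a moment's care.
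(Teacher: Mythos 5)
Your proposal is correct and follows essentially the same route as the paper, which likewise obtains the corollary by instantiating Theorem~\ref{teo:Clifitng} with $\theory{E}=\theory{C}$, citing Proposition~\ref{lem:nf-subst} for compatibility with substitutions and the existence of the nominal $\theory{C}$-unification algorithm. Your explicit check that $\theory{C}$ is a first-order theory (needed so that Theorem~\ref{prop:nom-e-coherence} applies inside Lemma~\ref{theo:rewritenarr}) is a point the paper leaves implicit, but it does not change the argument.
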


%------------------------------------------------------------------------------
\section{Conclusion and Future Work}
\label{sect:future-work}

In this work, we proposed definitions for nominal $\theory{R,E}$-rewriting and $\theory{R,E}$-narrowing and proved some properties relating them, obtaining the proof of the \theory{E}-Lifting Theorem, in the case \theory{R}  is an \theory{E}-convergent NRS, $\to_\theory{R,E}$ is \theory{E}-coherent and a complete algorithm for nominal \theory{E}-unification exists. %
%taking into account commutativity. 
%The next step is to complete its proof. 
As $\theory{C}$ is the only equational theory for which a complete algorithm for nominal unification exists, we illustrate our results using this theory. Also, since the nominal \theory{C}-unification problem (when using freshness constraints)  only is finitary, our nominal $\theory{C}$-narrowing tree is infinitely branching. In future work, we plan to investigate alternative approaches to nominal \theory{C}-unification for which the representation of solutions is finite, such as the approach using fixed-point constraints.

\bibliographystyle{eptcs}
\bibliography{generic}

\end{document}